\definecolor{base3}{HTML}{fdf6e3}
\DeclareMathAlphabet{\mathcal}{OMS}{cmsy}{m}{n} % use regular cal font
\crefname{equation}{}{}
\crefname{enumi}{}{}
\providecommand\given{}
\newcommand\SetSymbol[1][]{%
\nonscript\:#1\vert
\allowbreak
\nonscript\:
\mathopen{}}
\DeclarePairedDelimiterX\Set[1]\{\}{%
\renewcommand\given{\SetSymbol[\delimsize]}
#1
}
\newcommand{\tf}[0]{k}
\newcommand{\kpint}[0]{\kappa}
\newcommand{\kp}[0]{\boldsymbol{\kappa}}
\newcommand{\qp}[0]{\mathbf{q}}
\newcommand{\rr}[0]{\mathbf{r}}
\newcommand{\ro}[0]{\rr_{0}}
\newcommand{\rp}[0]{\boldsymbol{\rho}}
\newcommand{\rpnb}[0]{\rho}
\newcommand{\rpo}[0]{\boldsymbol{\rho}_0}
\newcommand{\p}[0]{^\prime}
\newcommand{\na}[0]{\mathrm{NA}}
\newcommand{\kzo}[0]{k_z}
\newcommand{\kz}[1]{\kzo\left(#1\right)}
\newcommand{\ko}[0]{k}
\newcommand{\kmin}[0]{k_{\mathrm{min}}}
\newcommand{\kmax}[0]{k_{\mathrm{max}}}
\newcommand{\Nk}[0]{N_{k}}
\newcommand{\Deltak}[0]{\Delta_{k}}
\newcommand{\koi}[0]{k_{i}}
\newcommand{\Nf}{N_f}
\newcommand{\nf}{f}
\newcommand{\ns}{s}
\newcommand{\Ns}{N_s}
\newcommand{\zf}[0]{z_f}
\newcommand{\pspec}[0]{\zeta}
\newcommand{\lmin}[0]{\lambda_{\mathrm{min}}}
\newcommand{\lmax}[0]{\lambda_{\mathrm{max}}}
\newcommand{\Sh}[0]{\hat{S}}
\newcommand{\Ah}[0]{\hat{A}}
\newcommand{\Hbar}[0]{\bar{\mathbf{H}}}
\newcommand{\apfunc}[1]{\vartheta\left(#1 \right)}
\newcommand{\ones}[1]{\mathbbm{1}_{#1}}
\newcommand{\boldhat}[1]{\mathbf{\hat{#1}}}
\newcommand{\B}[0]{\boldhat{B}}
\newcommand{\Bbar}[0]{\bar{\mathbf{B}}}
\newcommand{\Abar}[0]{\bar{\mathbf{A}}}
\newcommand{\Abarl}[0]{\Abar^\qp}
\newcommand{\Bbarl}[0]{\Bbar^\qp}
\newcommand{\Ahl}[0]{\boldhat{A}^\qp}
\newcommand{\Bhl}[0]{\B^\qp}
\newcommand{\Vl}[0]{\mathbf{V}^\qp}
\newcommand{\phl}[0]{\boldhat{p}^\qp}
\newcommand{\pbl}[0]{\bar{\mathbf{p}}^\qp}
\newcommand{\sbl}[0]{\bar{\mathbf{s}}^\qp}
\newcommand{\Phitl}[0]{\tilde{\Phi}^\qp}
\newcommand{\Phil}[0]{\Phi^\qp}
\newcommand{\nset}[0]{\mathsf{N}^\qp}
\newcommand{\nsetbar}[0]{\bar{\mathsf{N}}^\qp}
\newcommand{\nsetperp}[0]{\left(\nset\right)^\perp}
\newcommand{\nsetbarperp}[0]{\left(\nsetbar\right)^\perp}
\newcommand{\bset}[0]{\mathsf{B}}
\newcommand{\vset}[0]{\mathsf{V}}
\newcommand{\krank}[1]{\mathrm{krank}\left(#1\right)}
\newcommand{\rank}[1]{\mathrm{rank}\left(#1\right)}
\newcommand{\dimm}[1]{\mathrm{dim}\left(#1\right)}
\newcommand{\nullspace}[1]{\mathrm{null}\left\{#1\right\}}
\newcommand{\rangespace}[1]{\mathrm{range}\left\{#1\right\}}
\renewcommand{\vect}[1]{\mathrm{vec}\left(#1 \right)}
\newcommand{\nsp}[0]{$N$-species }
\declaretheorem[name=Theorem, refname={Theorem, Theorems}]{theorem}
\declaretheorem[name=Lemma, refname={Lemma, Lemmas}]{lemma}
\declaretheorem[name=Corollary, refname={Corollary, Corollaries}]{corollary}
\declaretheoremstyle[
bodyfont=\normalfont,
]{defstyle}
\declaretheorem[style=defstyle,name=Definition, refname={Definition, Definitions}]{definition} 
\pgfplotsset{compat=newest}
\newcommand\inputpgf[2]{{
    \let\pgfimageWithoutPath\pgfimage
    \renewcommand{\pgfimage}[2][]{\pgfimageWithoutPath[##1]{#1/##2}}
    \input{#1/#2}
  }}
\begin{document}
\title{Composition-Aware Spectroscopic Tomography}
\date{}
\author[1]{Luke Pfister}
\author[1,2,3,4,5]{Rohit Bhargava}
\author[1]{Yoram Bresler}
\author[6]{P. Scott Carney}
\affil[1]{Department of Electrical and Computer Engineering, University of Illinois at Urbana-Champaign, Urbana, Illinois 61801, USA}
\affil[2]{Department of Bioengineering, University of Illinois at Urbana-Champaign, Urbana, Illinois 61801, USA}
\affil[3]{Department of Mechanical Science and Engineering, University of Illinois at Urbana-Champaign, Urbana, Illinois 61801, USA}
\affil[4]{Department of Chemical and Biomolecular Engineering, University of Illinois at Urbana-Champaign, Urbana, Illinois 61801, USA}
\affil[5]{Department of Chemistry, University of Illinois at Urbana-Champaign, Urbana, Illinois 61801, USA}
\affil[6]{The Institute of Optics, University of Rochester, Rochester, New York 14627, USA}

\maketitle
\begin{abstract}
  Chemical imaging provides information about the distribution of chemicals within a target. When
  combined with structural information about the target, in situ chemical imaging opens the door to
  applications ranging from tissue classification to industrial process monitoring. The combination
  of infrared spectroscopy and optical microscopy is a powerful tool for chemical imaging of thin
  targets. Unfortunately, extending this technique to targets with appreciable depth is
  prohibitively slow.

  We combine confocal microscopy and infrared spectroscopy to provide chemical imaging in three
  spatial dimensions. Interferometric measurements are acquired at a small number of focal depths,
  and images are formed by solving a regularized inverse scattering problem. A low-dimensional
  signal model is key to this approach: we assume the target comprises a finite number of distinct
  chemical species. We establish conditions on the constituent spectra and the number of
  measurements needed for unique recovery of the target. Simulations illustrate imaging of cellular
  phantoms and sub-wavelength targets from noisy measurements.
\end{abstract}

\section{Introduction}
\label{sec:intro}
Chemically specific imaging provides quantitative information about the distribution of chemicals
within a target. This may be accomplished through the use of exogenous chemicals or molecular
staining to improve contrast when the target is imaged with visible light. In many applications,
these dyes cannot be introduced \emph{in situ} and the agents are often
damaging to the target.

Vibrational spectroscopy with mid-infrared light presents a solution \cite{Bhargava2012}.
Absorption of mid-infrared light depends on chemical composition. The underlying chemistry of a
target can be determined, non-invasively, by illuminating the object with mid-infrared light and
recording an absorption spectrum.  

In principle, mid-infrared spectroscopy can provide chemically specific, spatially resolved imaging
in three spatial dimensions using a confocal scanning strategy: the target would be scanned
point-by-point in three spatial dimensions, and an absorption spectrum would be measured at each
point \cite{Davis2010a, Davis2010b}. For a target with two spatial dimensions, this is feasible- a
typical data set of $1024$ spectral samples over a $1024 \times 1024$ pixel grid requires on the
order an of hour of acquisition time and generates roughly 4 GB of data \cite{Reddy2013-High-Defin,
  Tiwari2016, Tiwari2017, Mittal2018-Simul-Cancer}. Scanning along a third spatial dimension (depth)
makes imaging even a single target impractical: the resulting dataset would require over 4 terabytes
of storage and roughly a month of acquisition time.

The key challenge in jointly measuring structural and chemical information is dimensionality: with
no constraints, the target can vary in three spatial and one spectral dimension. Existing imaging
modalities explicitly or implicitly rely on simple signal models to reduce the dimensionality of the target
and allow for practical imaging.

Optical Coherence Tomography (OCT) and Interferometric Synthetic Aperture Microscopy
(ISAM) are scattering-based imaging modalities that reconstruct the 3D spatial distribution of a
target by ignoring spectral variation \cite{Fercher2003, Davis2008, Ralston2007-Inter-Synth, Davis2007}
, although limited spectral information can be recovered at
the expense of spatial resolution by way of time-frequency analysis \cite{Oldenburg2007,Morgner2000,
  Bosschaart2013}.

Fourier Transform Infrared (FTIR) spectroscopy, a workhorse of academic and industrial labs
worldwide, neglects all spatial variation within the target---thus reducing the target to a single
dimension. An extension, FTIR microspectroscopy, provides spatially and spectrally resolved
measurements but requires the target to be very thin with only transverse heterogeneities. Unmodeled
spatial variations in the target cause scattering and diffraction, ultimately distorting the
measured spectra\cite{Davis2010a, Davis2010b}.

% Often, chemical imaging data is  

% the ultimate goal of chemical imaging is not the image itself, but rather to make a decision
% using the data.  For instance, FTIR microspectroscopy is a

% An image acquired using an FTIR microscope is sent into a classification algorithm, and each pixel
% is classified into healthy tissue, cancerous tissue, or various other
% classes\cite\cite{Bhargava2006-High-Throug, Fernandez2005-Infrar-Spect,Mayerich2014a, Mittal2018}.

We propose an approach that bridges these two extremes and allows for practical, chemically specific
imaging. We call this \emph{spectroscopic tomography}. Rather than finely scanning the focus through
the axial dimension of the target, we acquire data at a small number of \emph{en-face} focal planes.
The target is recovered by solving the linearized scattering problem. A low-dimensional
model is used to regularize the inverse problem: we model the target as the linear combination of a
finite number of distinct chemical species.
We call this the $N$-species approximation. We develop
a set of algebraic conditions for unique recovery and examine the conditioning of the inverse
problem. Reconstructions from synthetic phantom data illustrate the promise of the model.

The \nsp model was investigated for a one-dimensional target in
\cite{Deutsch2015}. However, \cite{Deutsch2015} involved several involve several
unrealistic assumptions, leading to results of unrealistically high quality. We
extend this work in several directions: we (i) use a non-asymptotic forward
model; (ii) demonstrate material-resolved reconstruction of samples with two
spatial dimensions (one transverse and depth, easily extended to three spatial
dimensions) from synthetic scattering data that are not generated according to the first Born
approximation; and (iii) refine the conditions for recovery of a sample
consisting of \nsp from interferometric scattering experiments.

The paper is organized as follows. In \cref{sec:forward} we describe the forward model.
\cref{sec:nspecies} describes the \nsp model in greater detail. We discuss the sampling and
discretization procedure in \cref{sec:sampling_and_discretization}. We investigate the inverse
problem in \cref{sec:nspecies_inverse}, and demonstrate the method by performing numerical
reconstructions from simulated measurements in \cref{sec:nspecies_sims}.

\subsection{Notation}
We write the set of integers $\left\{0, 1, \hdots, N-1\right\}$ as $[N]$ and the two-fold Cartesian
product $[N] \times [N]$ as $[N]^2$. We write the imaginary unit as $\mathrm{i}$. Finite-dimensional vectors
are denoted by lower-case bold letters, \eg $\mathbf{x} \in \Cbb^N$. Finite-dimensional matrices and
multi-dimensional arrays are written using upper-case bold letters. We adopt Matlab-style indexing
notation: given a matrix $\mathbf{A}\in\Cbb^{N \times M}$, its $i$-th row is $\mathbf{A}[i,
\colon]$, the $j$-th column is $\mathbf{A}[\colon, j]$, and the $(i, j)$-th element is $\mathbf{A}[i,j]$.
We denote the vector $\vect{\mathbf{A}} \in \Cbb^{NM}$ is formed by stacking the columns of
$\mathbf{A}$ into a single vector (\ie, row-major ordering). The range, null space, and rank of a
matrix $\mathbf{A}$ are written $\rangespace{\mathbf{A}}, \nullspace{\mathbf{A}}$, and
$\rank{\mathbf{A}}$. Given $\mathbf{x} \in \Cbb^N$, the diagonal matrix
$\mathrm{diag}\left\{\mathbf{x}\right\} \in \Cbb^{N \times N}$ has the entries of $\mathbf{x}$ along
its main diagonal.

The transpose (\emph{resp.} Hermitian transpose) of a matrix is written $\mathbf{A}^{\mathsf{T}}$
(\emph{resp.} $\mathbf{A}^{\mathsf{H}}$). The
$\ell_p$ norm of $\mathbf{x} \in \Cbb^N$ is $\norm{\mathbf{x}}_p = \left(\sum_{j=1}^N
  \abs{\mathbf{x}[j]}^p\right)^{1/p}$. For vectors in $\Rbb^2$ or $\Rbb^3$ we use the shorthand
$\abs{r} = \norm{\rr}_2$. The $N \times N$ identity matrix is $\mathbf{I}_N$, and the vector $[1, 1,
\hdots 1]^{\mathsf{T}} \in \Rbb^N$ is written $\ones{N}$. The tensor (or Kronecker) product between
matrices $\mathbf{A}$ and $\mathbf{B}$ is $\mathbf{A} \otimes \mathbf{B}$.

\section{Preliminaries}
\label{sec:forward}

% \begin{figure}
%   \centering
%   \inputpgf{figs/nk_vs_susceptibility}{nk_vs_susceptibility.pgf}
%   \caption{Comparing the complex refractive index and complex susceptibility/spectral profile. 
%     The spectral profile is generated using the sum of Lorentzian model \cref{eq:lorentzian_species}.
%     }
%   \label{fig:lorentzian_spectra} 
% \end{figure}
We characterize the sample under investigation by its complex refractive index,
$n(\rr, \ko) = n_b + \delta n(\rr, \ko)$ where $n_b$ is the refractive index of
the background medium and $\delta n$ is the perturbation due to the sample; for
simplicity, we take $n_b = 1$. Here, $\rr = (x, y, z) = (\rp, z)$, where $\rp$
are the transverse dimensions and $z$ indicates the axial dimension. We assume
that $\delta n$ is (spatially) supported in the bounded region $\Gamma \subset
\Rbb^3$. The free-space wavenumber $\ko$ is related to temporal frequency
$\omega$ by $\ko = \omega / c$, where $c$ is the speed of light in free space.
The real part of the complex refractive index is the ratio between $c$ and the
phase velocity in the medium, while the imaginary part indicates attenuation due
to propagation through the target.

Under the first Born approximation, the obtained measurements are linear in the complex
susceptibility $\eta \triangleq n^2 - 1$;  we will work with the susceptibility
rather than the refractive index. Note that $\eta$ is also supported on $\Gamma$.

In the context of spectroscopy, the ``spectrum'' of a sample usually refers either to its complex
refractive index or only the imaginary part of the refractive index. Consider a homogeneous medium
with refractive index $n(\ko) = n_r(\ko) + \mathsf{i} \kappa(\ko)$. The real part, $n_r(\ko)$, has
mean value greater than one and the imaginary part, $\kappa(\ko)$, is non-negative. Relating
$\eta(\ko)$ to $n(\ko)$, we have
\begin{equation}
\eta(\ko) = n(\ko)^2 - 1 = n_r(\ko)^2 - \kappa(\ko)^2 - 1 + 2 \mathsf{i} n_r(\ko) \kappa(\ko).
\end{equation}
Unlike the refractive index, the mean value of the real part of $\eta(\ko)$
may be less than one and can be negative. The imaginary
part of $\eta(\ko)$ remains non-negative.

\subsection{Interferometric Synthetic Aperture Microscopy}
\begin{figure}[t]
  \centering
  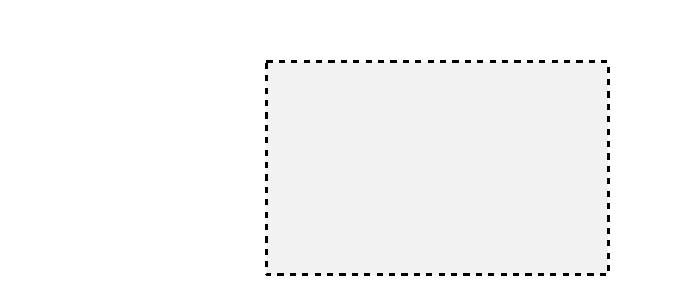
  \caption{Geometry and notation for scattering problem. The illuminating aperture is located at
    $(\rpo, 0)$. The field emerges from the aperture and is focused to the plane $z = \zf$. The
    incident beam interacts with the sample, $\eta$, and the backscattered light (red) is collected
    through the aperture to produce the measurement $S(\rpo, \ko, \zf)$. }
  \label{fig:beam_geometry} 
\end{figure}
In this section, we review the forward model relating the target, $\eta$, to the observed data.
For a complete derivation, see \cite{Ralston2006, Davis2008, Davis2007, Sentenac2018}.

The imaging geometry is depicted in \cref{fig:beam_geometry}. An aperture located at $(\rpo, 0)$
emits a broadband Gaussian beam focused to a point $\ro = (\rpo, \zf)$ within the sample. The
illuminating field interacts with the sample, and a portion of the light is scattered backwards and
is collected through the aperture. The aperture is raster scanned along the transverse coordinates
$\rpo$. At each point the scattered field is measured interferometrically, from which we use
standard techniques to recover the complex (phase-resolved) measurements. In the remainder of this
paper, we ignore the interferometric aspects of data acquisition and work directly with the
phase-resolved measurements.

Under the first Born approximation, the measured data $S(\rpo, \ko, \zf)$ are a linear function of
$\eta$; we have
\begin{align}
S(\rpo, \ko, \zf) &= \iint A(\rpo - \rp, z -\zf, \ko) \eta(\rp, z, \ko) \ \mathrm{d}z \ \mathrm{d}^2 \rpnb ,
                    \label{eq:non_asymp_space}
\end{align}
or, after taking a Fourier transform with respect to the scanning dimension $\rpo$, 
\begin{align}
  \Sh(\kp, \ko, \zf) = \frac{1}{2 \pi} \int S(\rpo, \ko, \zf) e^{-\mathrm{i} \kp \cdot \rpo} \mathrm{d}^2 \rpnb  
                   = \int \Ah(\kp, z - \zf, \ko) \hat{\eta}(\kp, z, \ko) \mathrm{d}z .
  \label{eq:non_asymp}
\end{align}
We call the function $\hat{A}$ the \emph{ISAM kernel}. This function is itself defined by an
integral; explicitly,
\begin{equation}
  \Ah(\kp, z, \ko) \triangleq \frac{\abs{\pspec(\ko)}^2}{\ko^2 \na^2}
  \bigintsss_{\Omega(\kp, \ko)} 
  \frac{
  \exp{\left\{-\frac{1}{(\ko \na)^2}\left( \abs{\kp\p}^2 + \abs{\kp - \kp\p}^2   \right)
      + \mathrm{i} z \left(\kz{\kp\p, \ko} + \kz{\kp - \kp\p, \ko} \right)\right\}}
  }{\kz{\kp\p, \ko}}
  \ \mathrm{d}^2 \kpint\p,
  \label{eq:isam_kernel_integral}
\end{equation}
where $\kz{\kp, \ko} \triangleq \sqrt{\ko^2 - \abs{\kp}^2}$
and the set
$\Omega(\kp, \ko) \triangleq \left\{ \kp\p \in \Rbb^2 : \abs{\kp -\kp\p} \leq \ko , \ \abs{\kp\p}
  \leq \ko \right\} \subset \Rbb^3$
restricts the integral to propagating modes.  
The scalar $\na > 0$ is the numerical aperture of the illumination lens and 
$\abs{\pspec(\ko)}^2$ is the power spectrum of the illumination source. We assume that
$\pspec(\ko)$ is supported on the interval $[\kmin, \kmax]$.

% Synthetic aperture microscopy is similar to reflection tomography, although the ISAM forward model
% is somewhat more complicated due to the use of Gaussian beam illumination rather than plane or
% spherical waves \cite{Kak2001,Wu1987-Diffr-Tomog}.  

\subsection{Image Reconstruction using ISAM}
\label{sec:asymptotics}
\begin{figure}
  \centering
  \fontsize{8pt}{11pt}\selectfont% or whatever fontsize you like
  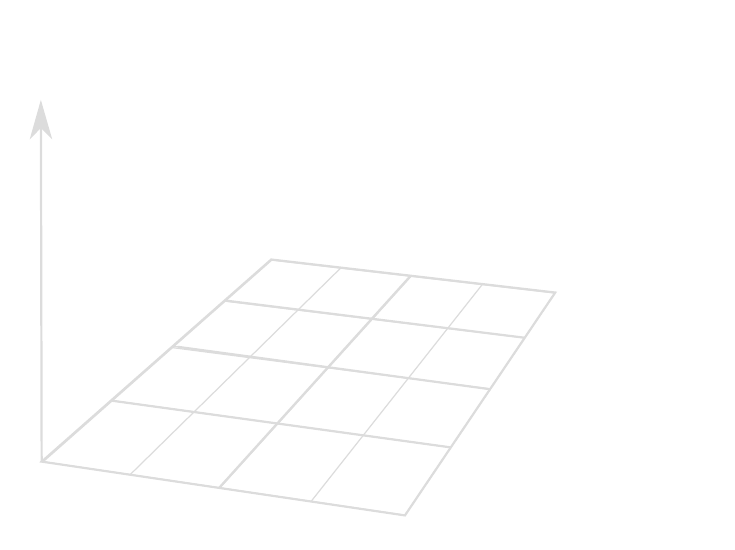
  \caption{Observable Fourier components for a target with two spatial and one spectral dimensions.
    The intersection of $\vset$ with a plane of constant $\ko$ becomes an arc of constant
    radius when projected onto the $(k_x, \kzo)$ plane.
  }
  \label{fig:sampling_3d}
\end{figure}
Next, we discuss recovering the object $\eta$ from measurements of the form \cref{eq:non_asymp}.
First, note that $\pspec(\ko)$ in \cref{eq:isam_kernel_integral} ensures that $\hat{A}(\kp, z, \ko)$
vanishes for any $\ko \notin [\kmin, \kmax]$. Further, $\Omega(\kp, \ko)$ is empty for $\abs{\kp} >
2 \ko$ and so $\hat{A}(\kp, z, \ko)$ vanishes for all $\abs{\kp} > 2 \kmax$. Thus the measurements
are related to the bandlimited transverse Fourier transform of the object.

Previous derivations of ISAM continue by invoking a pair of approximations to the integral
\cref{eq:isam_kernel_integral}.  One approximation holds when $\ko\abs{z - \zf}$ is small
and the other holds when the same quantity is large.  Both approximations are of the form
\begin{equation} % this is not correct; the isam papers do not use the separable form, but \mbox{i} going to pretend they do for the first submission
  \label{eq:asymp}
  \Ah(\kp, z-\zf, \ko) \approx \chi\left(\kp, 2 \ko \right) \abs{\pspec(\ko)}^2 \apfunc{\kp, \ko}
  \upsilon(z - \zf) e^{\mathrm{i} \kz{\kp, 2 \ko} (z - \zf)}
\end{equation}
where
\begin{equation}
  \chi\left(\kp, \ko \right) \triangleq \begin{cases}
    1, & \abs{\kp} \leq \ko \\ 
    0, & \text{otherwise, }
  \end{cases}
\end{equation}
the function $\apfunc{\kp, \ko}$ captures the transverse bandpass nature of the imaging system due
to the aperture, and $\upsilon(z)$ is a depth-dependent weighting function. The precise form of these
functions depends on if $\ko\abs{z - \zf}$ is large or small; in either case,  $\apfunc{\kp, \ko} \propto
e^{-\frac{\abs{\kp}^2}{(\ko \na)^2}}$ and $\upsilon(z)$ falls off as $z^{-1}$ \cite{Davis2007}.

Inserting \cref{eq:asymp} into the measurement model \cref{eq:non_asymp}, we have
\begin{equation}
  \Sh(\kp, \ko, \zf) \approx \chi\left(\kp, 2 \ko \right) \abs{\pspec(\ko)}^2 \apfunc{\kp, \ko}
  e^{\mathrm{i} \kz{\kp, 2\ko}\zf } 
  \int \upsilon(z - \zf) \hat{\eta}(\kp, z, \ko)
  e^{-\mathrm{i} \kz{\kp, 2\ko} z} \mathrm{d}z \label{eq:asymp_integral}.
\end{equation}
Consider a single, fixed, focal plane; this is the usual setting for ISAM imaging.  Define the weighted susceptibility
\begin{equation}
  \hat{\xi}_{\zf}(\kp, z, \ko) \triangleq \upsilon(z - \zf) \hat{\eta}(\kp, z, \ko).
\end{equation}
The integral in \cref{eq:asymp_integral}
is the Fourier transform of $\hat{\xi}_{\zf}$ with respect to $z$ evaluated at the
frequency $-\kz{\kp, 2\ko}$;  thus
\begin{align}
  \label{eq:asymp_diagonal}
  \Sh(\kp, \ko, \zf) &\approx \chi\left(\kp, 2 \ko \right)\abs{\pspec(\ko)}^2 \apfunc{\kp, \ko} e^{\mathrm{i}
                       \kz{\kp, 2\ko}\zf}
    \
    \hat{\hat{\xi}}_{\zf}(\kp, -\kz{\kp,  2 \ko},  \ko),
\end{align}
where the double hat indicates a 3D Fourier transform with respect to $\rr = (x, y, z)$. This is
a generalized projection-slice theorem: the ISAM data are approximately the bandlimited (spatial) Fourier
transform of the weighted susceptibility evaluated on a three dimensional
surface that is parameterized by $\kp$ and $\ko$. By varying $\kp$ and $\ko$, we are able to observe a
curved 3D ``slice'' of the four-dimensional function $\hat{\hat{\xi}}_{\zf}(\kp, \kzo, \ko)$ constrained to the
surface
\begin{equation}
  \vset \triangleq \left\{ (k_x, k_y, \kzo, \ko) :
      \sqrt{k_x^2 +k_y^2 + \kzo^2} = 2 \ko, \  \kzo < 0,  \ k_x^2 + k_y^2 \leq 4 (\ko \na)^2, \ \kmin \leq \ko \leq \kmax
  \right\}.
\end{equation} 
The sampling surface for a target with two spatial dimensions, \ie $\rr = (x, z)$, is illustrated in
\cref{fig:sampling_3d}; that we can only observe $\kzo < 0$ is due to the backscattering
geometry. As defined, $\vset$ contains only the Fourier components above the $e^{-2}$ cutoff
frequency of $\apfunc{\kp, \ko}$. This is arbitrary as $\apfunc{\kp, \ko}$ decays smoothly.

We cannot recover an arbitrary object given ISAM data at a single focal plane. If the focus $\zf$
were scanned in addition to $\rpo$, we could further simplify by taking a Fourier transform along
$\zf$. Then, the measurements would be of the form $\hat{\Sh}(\kp, \kzo, \ko) = \hat{\Ah}(\kp, \kzo, \ko)
\hat{\hat{\eta}}(\kp, \kzo, \ko)$, where the double hat indicates the 3D Fourier transform with
respect to $\rr$. Now, $\eta$ could be recovered using a standard deconvolution procedure.
Unfortunately, this is infeasible for reasons described in \cref{sec:intro}.

The situation is simplified if $\eta$ is not a function of $\ko$; such an object is said to be
\emph{non-dispersive}. This is one of the key assumptions on
which ISAM, OCT, diffraction tomography, and reflection tomography are built
\cite{Fercher2003,Wu1987-Diffr-Tomog,Kak2001}. In this case, the measurements are related to a 3D
slice of the 3D target $\eta(x,y,z)$.  The observable Fourier components are
\begin{align}
  \bset &\triangleq
               \left\{ (k_x, k_y, \kzo) :
      \sqrt{k_x^2 +k_y^2 + \kzo^2} = 2 \ko, \  \kzo < 0,  \ k_x^2 + k_y^2 \leq 4 (\ko \na)^2, \ \kmin \leq \ko \leq \kmax
               \right\}
\end{align}
The region $\bset$ is called the \emph{optical passband} of the ISAM imaging system. Strictly
speaking, we observe the Fourier components of the \emph{weighted} susceptibility on $\bset$, but
this distinction is usually ignored. Only a non-dispersive (weighted) object whose spatial Fourier
transform is supported on $\bset$ can be perfectly imaged by the ISAM system with a single focal
plane. Otherwise, ISAM is able to recover, at best, a spatial bandpass version of the original
target. In the visualization of \cref{fig:sampling_3d}, $\bset$ is the ``shadow'' cast by $\vset$ onto
the plane $\ko = 0$.

We do not directly use the approximate kernel \cref{eq:asymp} in this paper. However, we use the
insight provided by this approximation as a guide; in particular, the Fourier transform
interpretation and the optical passband $\bset$ inform the sampling procedure and help 
establish fundamental limits of the imaging system.

\section{The \nsp Model}
\label{sec:nspecies}
\subsection{The Model}
\label{sub:nspecies}
Existing imaging modalities use simplified signal models to reduce the dimensionality of the sample
and allow for practical imaging. ISAM, optical coherence tomography, diffraction tomography, and
reflection tomography require the target be either non-dispersive or have known spatially invariant
dispersion characteristics. In this case, the susceptibility is of the form $\eta(\rr, \ko) = p(\rr)
h(\ko)$, where $p(\rr)$ captures the spatial density of the target and $h(\ko)$ characterizes the
wavelength-dependent dispersion characteristics. If $h(\ko)$ is known, only $p(\rr)$ must be
determined---thus reducing the problem to recovery of a three-dimensional object. Conversely,
Fourier Transform Infrared spectroscopy of a bulk medium assumes that the sample is spatially
homogeneous, so that $\eta(\rr, \ko) = h(\ko)$. An extension, FTIR microscopy, models the sample as
a thin absorbing screen; thus $\eta(\rr, \ko) = \eta(\rp, \ko)$, a three-dimensional object.

These examples severely restrict the class of samples that can be imaged. We propose a model
that is more expressive than these examples while still allowing practical imaging.
\begin{definition}[The \nsp model \cite{Deutsch2015}]
  An object, described by a susceptibility $\eta(\rr, \ko)$,
  is said to satisfy the \nsp model if 
\begin{equation}
  \label{eq:nspecies_separable}
  \eta(\rr, \ko) = \sum_{\ns=1}^{\Ns} p_{\ns}(\rr) h_{\ns}(\ko).
\end{equation}
The function $p_{\ns}(\rr)$ captures the spatial variation of the $\ns$-th species and is called the
\emph{spatial density}. If species $\ns$ is not present at location $\rr$, then $p_{\ns}(\rr) = 0$. The
complex function $h_{\ns}$ models the wavelength-dependent properties of the $\ns$-th species and is
called the \emph{spectral profile}.
\end{definition}
The \nsp model, introduced in \cite{Deutsch2015}, is a rank $\Ns$
approximation to a general susceptibility.  
A similar decomposition has been applied to magnetic resonance
spectroscopic imaging, where  it is called the Partially Separable (PS) function model
\cite{Liang2007, Nguyen2010, Nguyen2011, Nguyen2013}, and to material
decomposition in X-ray tomography \cite{Alvarez1976-Energ-Selec, Cammin2012-Spect}.

\subsection{Spectroscopic Tomography with the $N$-Species Model}
Inserting the \nsp model \cref{eq:nspecies_separable} into the linearized forward model
\cref{eq:non_asymp}, we have 
\begin{equation}
  \label{eq:nspecies_meas}
  \Sh(\kp, \ko, \zf) = \sum_{\ns=1}^{\Ns} h_{\ns}(\ko) \int_{-\infty}^\infty \Ah(\kp, z - \zf, \ko) \hat{p}_{\ns}(\kp, z) \mathrm{d}z.
\end{equation}
At a given focal plane, the measurements are the sum of $\Ns$ independent ISAM experiments, each on
a non-dispersive object $\hat{p}_{\ns}(\kp, z)$ and each weighted by the spectral profile $h_{\ns}(\ko)$. In
what follows, we study inverse problem associated with spectroscopic optical tomography: we wish to
recover an object that satisfies the \nsp model from measurements of the form
\cref{eq:nspecies_meas}.

In the single species case, the inverse problem can be solved from data acquired at
single focal plane---this is the usual ISAM problem. On the other hand, an arbitrary sample can be
recovered by finely scanning in all three spatial dimensions (\ie, along $\rpo$ and $\zf$) and
acquiring a spectrum at each point, but this is infeasible as described in \cref{sec:intro}.

The \nsp model is a middle ground between a single species object and an arbitrary
one.  We show that the number of measurements required to solve the inverse problem also
lies in a middle ground between these two extremes:  in particular, an object
satisfying the \nsp model can be recovered using $\Nf \approx \Ns$ focal planes.

We divide the inverse problem into three distinct cases.
\begin{enumerate}[label=(P\arabic*)] 
\item \emph{Known Spectra.} Assume the spectral profiles $\left\{h_{\ns}\right\}_{\ns=1}^{\Ns}$
  are fixed and known. The task reduces to a linear inverse problem---recovery of the
  $\left\{\hat{p}_{\ns}\right\}_{\ns=1}^{\Ns}$ from measurements of the form
  \cref{eq:nspecies_meas}. \label{enum:linear}
\item \emph{Spectra from a Dictionary.} Assume the target comprises at most $\Ns$ chemical species,
  but the spectral profiles are drawn from a (known) dictionary of some $M_s > \Ns$
  possible spectra. The inverse problem can be phrased as either a linear inverse problem over the
  entire dictionary, or as a nonlinear problem where the solution is constrained to lie in a union
  of subspaces.  
  \label{enum:dictionary}
\item \emph{Fully Blind.} Both the $\left\{h_{\ns}\right\}_{\ns=1}^{\Ns}$ and
  $\left\{\hat{p}_{\ns}\right\}_{\ns=1}^{\Ns}$ are unknown and must be recovered from measurements of the form
  \cref{eq:nspecies_meas}. This is a \emph{bilinear} inverse problem in $h_{\ns}$ and
  $p_{\ns}$.
  \label{enum:fully_blind}
\end{enumerate}
In this paper, we limit our attention to cases \cref{enum:linear,enum:dictionary}. Our analysis is
based on a discretized form of \cref{eq:nspecies_meas} wherein all quantities are replaced by
finite-dimensional versions, resulting in a so-called ``discrete-to-discrete'' inverse problem
\cite{Kress2014-Linear-Integ-Equat, Barrett2004-Foundations}. Next, we describe the sampling and
discretization procedure.

\section{Sampling and Discretization of the Forward Model}
\label{sec:sampling_and_discretization}
\subsection{Sampling}
\label{sec:sampling}
The instrument acquires samples of the spatial-domain measurement equation
\cref{eq:non_asymp_space}. We assume the object is (spatially) supported in a region $\Gamma \subset \Rbb^3$;
here, we take $\Gamma= [0, L_x] \times [0, L_y] \times [0, L_z]$. We write the number of samples as
$N_i$ and the discretization or sampling interval as $\Delta_i$ for $i\in \left\{x,y,z, k\right\}$. We obtain
measurements at the transverse aperture locations $\rpo = (n_x \Delta_x, n_y \Delta_y)$ for integers
$n_x, n_y$. The parameters are chosen to cover $\Gamma$, \ie $N_i \Delta_i = L_i$ holds for $i=x, y,
z$. For simplicity, we assume the sampling parameters are the same along the $x$ and $y$ directions:
$N_x = N_y$, $\Delta_x = \Delta_y$, and $L_x = L_y = N_x \Delta_x$. The wavenumber is sampled
uniformly over the interval $[\kmin, \kmax]$ with sampling interval $\Deltak$; the $i$-th measurement
wavenumber is $\koi \triangleq \kmin + i \Deltak$. We acquire data at $\Nf$ focal planes, written
$\left\{z_i : i=1,2,\hdots \Nf \right\}$. The same sampling parameters are used at each focal
plane; in particular, the set of sampled wavenumbers does not change.

We choose the sampling parameters as we would for a standard, single-species ISAM
problem.The necessary sampling intervals can be motivated using the approximate forward model
\cref{eq:asymp}. Under this model, it can be shown that ``point spread function'' $\abs{A(\rp, \ko,
  z)}$ (approximately) decays like a Gaussian in $\abs{\rp}$;  the measurements are ``essentially''
space limited \cite{Landau1962-Prolat-Spher}. We take $L_x$ large enough
to safely neglect the unmeasured data. Moreover, for fixed $\zf$, the measurements
are bandlimited in $\qp$ to $[-\kmax \sin{\na}, \kmax \sin{\na}]$; we sample along the transverse
dimension at intervals $\Delta_x, \Delta_y < \pi / (\kmax \sin{\na})$.
\subsection{Discretization}
Given samples of the measurements \cref{eq:non_asymp_space}, we take the 2D Discrete Fourier Transform (DFT) with
respect to the transverse coordinates and write the result as the array $\boldhat{S} \in \Cbb^{N_x
  \times N_x \times \Nk \times \Nf}$. For simplicity, we take $N_x$ to be even.
The 2D-DFT coordinate is written $\qp \in [N_x]^2$ .

As the measurements are bandlimited and the object is compactly supported, the DFT of the
measurements can be well-approximated by samples of the Fourier transform of the continuous model
\cref{eq:non_asymp} \cite{Barrett2004-Foundations}. The discretized \nsp measurement model is
\begin{equation}
  \boldhat{S}[\qp, m, \nf] = \sum_{\ns=1}^{\Ns} \mathbf{h}_{\ns}[m] \sum_{n=0}^{N_z-1} \boldhat{A}_{\nf}[\qp, m,  n] \boldhat{P}_{\ns}[\qp, n],
  \label{eq:disc_tensor_form}
\end{equation}
where $\mathbf{h}_{\ns} \in \Cbb^{\Nk}$ and $\boldhat{P}_{\ns} \in \Cbb^{N_x \times N_x \times N_z}$
are the discretized spectral profile and the 2D-DFT of the $\ns$-th spatial density with respect to
the transverse coordinates.
The coefficients $\boldhat{A}_{\nf}$ are obtained by sampling the continuous ISAM kernel
\cref{eq:isam_kernel_integral}; that is,
\begin{equation}
  \boldhat{A}_{\nf}[\qp, m, n] \triangleq \hat{A}\left(\kp(\qp), \kmin + m \Deltak, n \Delta_z - z_{\nf}\right).
\label{eq:isam_matrix_coeff}
\end{equation}
Here, $\kp(\qp) \triangleq (k_x(q_x), k_y(q_y))$ relates the DFT index $\qp$ to the continuous Fourier
coordinate $\kp$, where
\begin{equation}
  k_x(q_x) = \begin{cases}
    2 \pi q_x / L_x \quad & q_x < N_x / 2 \\
    2 \pi (q_x - N_x) / L_x \quad & \text{otherwise,}
  \end{cases}
  \label{eq:dft_to_ft_coord}
\end{equation}
and the same holds for $q_y$ and $k_y$.  

% In \cred{XYZ}, we use an \cred{alternate discretization procedure and show that, under certain
%   assumptions, solutions of the discretized inverse problem coincide with the continuous inverse
%   problem.}

\begin{figure}
  \centering
  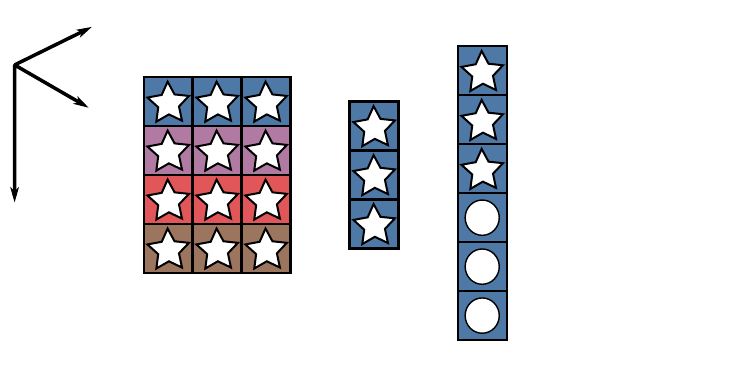
  \caption{
    The various unfoldings of the discretized spatial densities with  $\Ns = 2$.  Here, block color
    indicates the value of $\qp$.  Species 1 is marked with a star, while species 2 is indicated
    with a circle. }
  \label{fig:density_slicing}
\end{figure}

\subsection{Block-Matrix Form of $N$-Species Forward Model} With the spectral profiles fixed, the
measurements $\boldhat{S}$ are a linear function of the spatial densities. We can write the forward
model \cref{eq:disc_tensor_form} as a matrix-vector product, where the vector depends only on the
spatial densities. Moreover, as the forward model is separable in $\qp$, the resulting matrix has a
block-diagonal structure. In what follows, we will analyze the inverse problem for each $\qp$
independently.

We write the forward model \cref{eq:disc_tensor_form} as a separate linear system for each $\qp$.
Recall upper-case bold letters refer to matrices or arrays and
lower-case bold letters refer to vectors. We use a bar to denote objects that have been ``stacked''
or vectorized. Subscripts are used to slice a array with respect to the last index: \eg,
$\boldhat{S}_{\nf}$ represents all measurements from the $\nf$-th focal plane, while
$\mathbf{h}_{\ns}$ is the spectral profile for the $\ns$-th species. A superscript indicates a
submatrix or vector formed for particular value of $\qp$. In particular, we define
\begin{align}
  \boldhat{s}_{\nf}^{\qp} &\triangleq \boldhat{S}[\qp, \colon,  \nf] \in \Cbb^{\Nk}
\\
  \boldhat{A}^{\qp}_{\nf} &\triangleq \boldhat{A}_{\nf}[\qp, \colon, \colon]\in \Cbb^{\Nk \times N_z} \\
  \boldhat{p}_{\ns}^{\qp} &\triangleq \boldhat{P}_s[\qp, \colon] \in \Cbb^{N_z}.
\end{align}
Further, define the diagonal matrix $\mathbf{D}_{\ns} \triangleq \diag{\left( \mathbf{h}_{\ns} \right)} 
\in \Cbb^{\Nk \times \Nk}$.
Now, for fixed $\qp$ and $\nf$, 
\cref{eq:disc_tensor_form} can be written
\begin{equation}
  \boldhat{s}_{\nf}^{\qp}  = \sum_{{\ns}=1}^{\Ns} \mathbf{D}_{\ns} \boldhat{A}_{\nf}^{\qp} \boldhat{p}_{\ns}^{\qp}.
  \label{eq:disc_sum_form}
\end{equation}
The collection of \cref{eq:disc_sum_form} for $\nf \in [\Nf]$ can be written as a single linear
system in block form as
  \begin{equation}
    \begin{bmatrix}
      \boldhat{s}^{\qp}_1 \\
      \vdots \\
      \boldhat{s}^{\qp}_{\Nf}
    \end{bmatrix} = 
    \begin{bmatrix}
      \mathbf{D}_1 \boldhat{A}_{1}^\qp   &   \hdots & \mathbf{D}_{\Ns} \boldhat{A}_{1}^\qp    \\
      \vdots &  \ddots & \vdots \\
      \mathbf{D}_1 \boldhat{A}_{\Nf}^\qp   &   \hdots & \mathbf{D}_{\Ns} \boldhat{A}_{\Nf}^\qp    \\
    \end{bmatrix}
    \begin{bmatrix}
      \boldhat{p}^{\qp}_1 \\
      \vdots \\
      \boldhat{p}^{\qp}_{\Ns}
    \end{bmatrix},
  \end{equation}
or concisely as 
\begin{equation}
  \mathbf{\bar{s}}^\qp = \Phi^\qp \bar{\mathbf{p}}^\qp
  \label{eq:nsp_concise_one_block}
\end{equation}
where the vectors
\begin{align}
  \bar{\mathbf{p}}^\qp &\triangleq  \vect{\boldhat{P}[\qp, \colon, \colon]}
                       = [(\boldhat{p}^\qp_{1})^{\mathsf{T}}, \hdots, (\boldhat{p}^\qp_{\Ns})^{\mathsf{T}}]^{\mathsf{T}}
                       \in \Cbb^{\Ns N_z} \label{eq:pbar_l_defn}\\
  \mathbf{\bar{s}}^\qp &\triangleq
                       \vect{\boldhat{S}[\qp, \colon, \colon]} = 
                       [(\boldhat{s}^\qp_{1})^{\mathsf{T}}, \hdots, (\boldhat{s}^\qp_{\Nf})^{\mathsf{T}}]^{\mathsf{T}} \in \Cbb^{\Nf \Nk}, 
\end{align}
contain the spatial densities for each species and measurement for all focal planes, respectively, and 
\begin{equation}
  \Phi^\qp \triangleq
  \begin{bmatrix}
    \mathbf{D}_1 \boldhat{A}_{1}^\qp   &   \hdots & \mathbf{D}_{\Ns} \boldhat{A}_{1}^\qp    \\
    \vdots &  \ddots & \vdots \\
    \mathbf{D}_1 \boldhat{A}_{\Nf}^\qp  &   \hdots & \mathbf{D}_{\Ns} \boldhat{A}_{\Nf}^\qp    \\
  \end{bmatrix} \in
  \Cbb^{\Nf \Nk \times \Ns N_z}.
  \label{eq:phi_subblock}
\end{equation}
Each block-row of $\Phi^\qp$ corresponds to a single transverse Fourier component of measurements
taken at a single focal plane, and the $\ns$-th block-column corresponds to the $\ns$-th species.

%%%%%%%%%%%%%%%%%%%%%%%%%%%%%%%%%%%%%%%%%%%%%%%%%%%%%%%%%%%%%%%%%%%%%%%%%%%%%%%

\subsection{Construction using Khatri-Rao product}
\label{sec:khatri-rao}
We briefly discuss an alternative construction of $\Phil$ that
connects the \nsp inverse problem to a broad range of related problems, which we explore
in \cref{sec:related_kr}. 
\begin{definition}
  The \emph{row-wise Khatri-Rao} product of matrices $\mathbf{A} \in \Cbb^{m \times n_1}$
  and $\mathbf{B} \in \Cbb^{m \times n_2}$ is
  \begin{equation}
    \mathbf{A} \odot \mathbf{B} =
    \begin{bmatrix}
      \mathbf{A}[1,\colon] \otimes \mathbf{B}[1, \colon] \\
      \vdots \\
      \mathbf{A}[m,\colon] \otimes \mathbf{B}[m, \colon] 
    \end{bmatrix} \in \Cbb^{m \times n_1 n_2},
  \end{equation}
  \ie each row of $\mathbf{A} \odot \mathbf{B}$ is the Kronecker product of the corresponding rows of
  $\mathbf{A}$ and $\mathbf{B}$.
\end{definition}
We use the Khatri-Rao product to construct $\Phi^\qp$. 
First, we define
the matrix of spectral profiles $\mathbf{H} \in \Cbb^{\Nk \times \Ns}$ by
\begin{equation}
  \mathbf{H}[m, \ns] = \mathbf{h}_{\ns}[m].
\end{equation}
Now, the first block-row of $\Phi^\qp$ is $\mathbf{H} \odot \hat{\mathbf{A}}_1^\qp$. To obtain all block-rows of
$\Phi^\qp$, we first stack the $\left\{ \boldhat{A}_{\nf}^\qp \right\}_{\nf=1}^{\Nf}$ into the matrix $\Abar^\qp \triangleq
[(\boldhat{A}_1^\qp)^{\mathsf{T}} \hdots (\boldhat{A}_{\Nf}^\qp)^{\mathsf{T}}]^\mathsf{T} \in \Cbb^{\Nf
  \Nk \times N_z}$. Next, stack $\Nf$ copies of $\mathbf{H}$ into $\Hbar \triangleq
(\ones{\Nf}^{\mathsf{T}} \otimes \mathbf{H}) = [\mathbf{H}^{\mathsf{T}}, \hdots,
\mathbf{H}^{\mathsf{T}}]^\mathsf{T} \in \Cbb^{\Nf \Nk \times \Ns}$. Now, $\Phi^\qp = \Hbar \odot
\mathbf \Abar^\qp$.

\section{The $N$-Species Inverse Problem}
\label{sec:nspecies_inverse}

\subsection{Preliminaries:  The Single Species Case}
\label{sec:isam_properties}
\begin{figure}
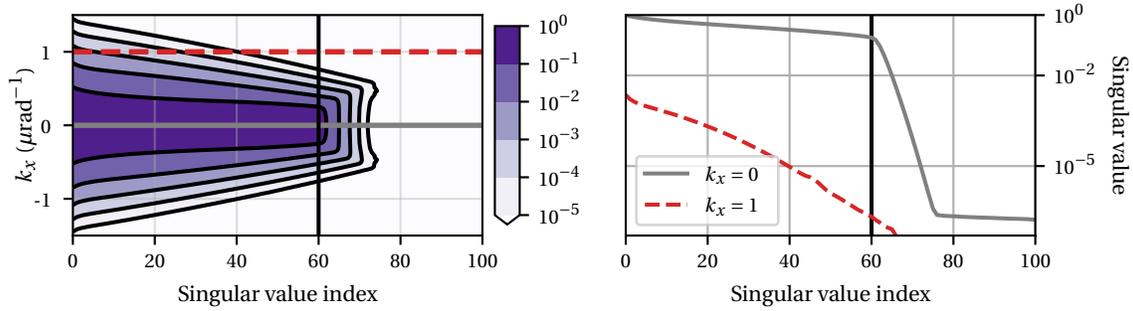

  \centering
  \inputpgf{figs/sv_A}{sv_one_species_wide.pgf}
  \caption{Left: Singular values of $\boldhat{A}^{q_x}_{\nf}$ in the case of one transverse dimension. The coordinate $k_x$ is obtained from
    $q_x$ using \cref{eq:dft_to_ft_coord}. Right: singular values for $k_x=0$ and
    $k_x=1$. The vertical line marks the rank estimate \cref{eq:rank_est}. The focal plane is
    located at $\zf = 140 \si{\micro\meter}$. The remaining system parameters are listed in
    \cref{tab:point_scatter_params}. }
  \label{fig:sv_A_plot} 
\end{figure} 
Under the \nsp model \cref{eq:disc_tensor_form}, the measurements at each focal plane are modeled as the sum of $\Ns$
independent ISAM experiments; thus, the ISAM matrices $\Ahl_{\nf}$ set fundamental limits on
what can be imaged. Stated plainly, if a spatial density lies in the null space of each
$\Ahl_{\nf}$, then it will generate no measurement and thus cannot be imaged using the
proposed method.

A careful study of the spectral properties of these matrices is beyond the scope of this
paper. Instead, we combine a numerical study of these matrices with intuition obtained from the
approximate ISAM kernel \cref{eq:asymp}.
We computed the singular values of $\boldhat{A}^\qp_{\nf}$ in the case of one transverse dimension,
$x$, using the computational parameters listed in \cref{tab:point_scatter_params}. The singular
values are shown in \cref{fig:sv_A_plot}, where the continuous Fourier coordinate $k_x$ is
determined from the DFT index $q_x$ using \cref{eq:dft_to_ft_coord}. While we do not form
$\boldhat{A}^\qp_{\nf}$ using the approximate kernel, the approximate kernel provides intuition for
the behavior seen here. The largest singular values die off quickly as $k_x$ increases, as expected
due to the function $\apfunc{\kp, \ko}$ in \cref{eq:asymp}. Moreover, for $\abs{k_x} > 2 \kmax$,
ISAM matrix is uniformly zero due to $\chi(\kp, 2\ko)$.

According to the approximate forward model \cref{eq:asymp_integral}, for $k_x = 0$ we obtain the
(bandlimited) Fourier transform of the space-limited weighted
susceptibility. The eigenvalue spectrum of space-and-frequency limited Fourier operators has been
studied, beginning with a series of papers by Slepian, Landau, and Pollak
\cite{Slepian1961-Prolat-Spher, Landau1961-Prolat-Spher, Landau1962-Prolat-Spher,
  Slepian1978-Prolat-Spher, Slepian1983}. In the discrete case, the eigenvalue and singular value
spectrum of space-and-frequency limited Discrete Fourier Transform (DFT) matrices have been studied;
such matrices are submatrices formed by consecutive rows and columns of a DFT matrix
\cite{Jain1981-Extrap-Algor, Gruenbaum1981-Eigen-Toepl-Matrix, Zhu2018-Eigen-Distr}. The singular
values of a space-and-frequency limited DFT matrix are divided into three distinct regions: (1) a
region wherein the singular values are near one; (2) a transition region where the singular values
decay exponentially; and (3) the remaining singular values are nearly zero. The number of singular
values in the first region is called the \emph{effective rank} and is written $r_e$. A direct
application of Slepian-Pollak theory predicts \cite{Slepian1978-Prolat-Spher, Zhu2018-Eigen-Distr}
\begin{equation}
  r_e =
  \frac{2 (\kmax - \kmin)}{2 \pi /  L_z} = \frac{L_z}{\pi} (\kmax - \kmin).
  \label{eq:rank_est}
\end{equation}
For fixed $\kp$, the approximate ISAM operator can be viewed as a space-and-frequency limited
Fourier operator with additional weighting in the spatial domain by $\upsilon(z)$ and in the
frequency domain by $\apfunc{\kp, \ko}$. For each $\kp$ the operator is space-limited to a region of
length $L_z$; this is due to assumption that $\eta$ is compactly supported.
Moreover, the operator is frequency-limited to the optical passband $\bset$.
In the discretized setting, only $\mathbf{A}^0_{\nf}$ can be viewed as a (diagonally scaled) DFT
matrix, as for $\qp \neq 0$ the resulting Fourier transform is not uniformly sampled.

We can use the theory of space-and-frequency limited DFT matrices to understand the behavior of the
spectrum of $\boldhat{A}^0_{\nf}$ as shown in \cref{fig:sv_A_plot}. The singular values are broken
into three regions: in the first region, the singular values decay exponentially, albeit at a rate
slower than in the second region. The transition between the first and second regions still occurs at
$r_e$. In the case of the parameters used in \cref{fig:sv_A_plot}, we have $r_e = 60$, and the
change in behavior at $r_e$ is evident. The case of $k_x \neq 0$ is more complicated as the
resulting Fourier transform is not uniformly sampled.

Recall that $\bset$ is the set of observable Fourier components of the weighted susceptibility,
$\nu(z - \zf) \hat{p}(\kp, z)$. A common practice in ISAM imaging is to ignore the axial
weighting function and treat $\bset$ as the observable Fourier components of the \emph{unweighted}
susceptibility (see, \eg \cite{Davis2007, Davis2008}). This is a reasonable approximation of the
imaging system. To justify the approximation, note that $\nu(z)$ is strictly positive and
slowly varying; thus the Fourier transform of the weighted and unweighted susceptibilities are
roughly supported on the same set.

Using the same line of reasoning, we assume that $\nullspace{\boldhat{A}^\qp_{\nf}}$ is invariant to
the choice of focal plane $\zf$. This is reasonable when the focal planes are close to one another.
Note that this is an implicit assumption in previous work on multi-focal ISAM \cite{Xu2014}.

\subsection{Algebraic Conditions for a Unique Solution to \cref{enum:linear}}

We now consider the discretized \nsp inverse problem. We begin by considering the discretized form
of \cref{enum:linear}: we assume the spectral profiles $\mathbf{h}_{\ns}$ are fixed and known. In
this case, recovery of the spatial densities from measurements of the form
\cref{eq:disc_tensor_form} is a linear inverse problem. We study the each ``one-dimensional''
problem $\sbl = \Phil \pbl$ for $\qp \in \left[ N_x \right]^2$, with $\Phi^\qp$ given by
\cref{eq:phi_subblock}. In the remainder of this section the DFT index $\qp$ is fixed.

Without additional constraints on the spatial densities, the existence and uniqueness of a solution
is determined entirely by the matrices $\Phil$. In this section, we establish algebraic conditions
for existence and uniqueness of a solution in terms of the ISAM matrices,
$\left\{\boldhat{A}^\qp_{\nf}\right\}_{{\nf}=1}^{\Nf}$, and the chemical spectra,
$\left\{\mathbf{h}_{\ns}\right\}_{{\ns}=1}^{\Ns}$. Earlier work on this problem claimed that $\Nf
\geq \Ns$ and linear independence of the $\mathbf{h}_{\ns}$ is necessary and sufficient for unique
recovery of the spatial densities within the optical passband\cite{Deutsch2015}. While necessary, we
show these two conditions are not sufficient.

For each focal plane, the ISAM matrix $\Ahl_{\nf}$ is of size $N_k \times N_z$, where $N_k$ is the
number of wavenumber samples and $N_z$ is the (axial) length of the discretized spatial density. Per
\cref{sec:isam_properties}, we assume the null space of $\Ahl_{\nf}$ is invariant to the choice of
focal plane, thus for fixed $\qp$ each matrix has the same rank. Let $r \triangleq \rank{\Ahl_{\nf}}$
for $\nf \in [\Nf]$. We write the shared nullspace of the ISAM matrices as $\nset \subseteq
\Cbb^{N_z}$; we have
\begin{equation}
\nset \triangleq \nullspace{\Ahl_{\nf}} \text{ for } \nf \in [\Nf].
\end{equation}
The optical passband is $\nsetperp$.
Define the subspace 
\begin{equation}
  \nsetbar \triangleq \nset \times \nset \hdots \times \nset =
  \mathrm{span}\Set*{ \ \bar{\mathbf{p}}^\qp
    = [(\boldhat{p}^\qp_{1})^{\mathsf{T}}, \hdots, (\boldhat{p}^\qp_{\Ns})^{\mathsf{T}}]^{\mathsf{T}}
    \given \boldhat{p}^\qp_{\ns} \in \nset,  {\ns} \in [\Ns] } \subseteq \Cbb^{\Ns N_z}
\end{equation}
of block vectors where each block is in $\nset$. The subspace $\nsetbarperp$ consists of block
vectors where each block lies in the optical passband, $\nsetperp$. In an abuse of notation, we
refer to both $\nsetperp$ and $\nsetbarperp$ as ``the optical passband''.

Using the \nsp model, the measurements are a weighted sum of ISAM experiments; thus any objects that
lie in $\nsetbar$ will also be in $\nullspace{\Phil}$. If an object cannot be imaged using ISAM, it
cannot be imaged using $\Phil$. We must consider uniqueness modulo $\nsetbar$; we wish to establish
conditions such that these are the \emph{only} objects that cannot be imaged using $\Phil$. In this
case, the \nsp model does not introduce additional ambiguity and each species is correctly
identified. We do no worse using the \nsp model than if we were able to image the spatial densities
independently using the ISAM system.

Let us pause to consider the geometry of a simple case: two species and a single focal plane. Here,
$\Phil = [\mathbf{D}_1 \Ahl_1, \mathbf{D}_2 \Ahl_1]$ and $\sbl = \Phil
\pbl = \mathbf{D}_1 \Ahl_1 \boldhat{p}_1 + \mathbf{D}_2 \Ahl_1
\boldhat{p}_2$. Clearly, if $\phl_1$ and $\phl_2$ are each in $\nset$, then
$\sbl = \mathbf{0}$. Suppose the $\mathbf{h}_{\ns}$ are non-zero for each index; then
$\mathbf{D}_{\ns}$ is full rank. Using the formula for the rank of a partitioned matrix,
\begin{align}
  \rank{\Phil} = \rank{[\mathbf{D}_1 \Ahl_1, \mathbf{D}_2 \Ahl_1]} 
              &= \rank{\mathbf{D}_1 \Ahl_1} + 
                \rank{\mathbf{D}_2 \Ahl_1}
                - \dimm{\rangespace{\mathbf{D}_1 \Ahl_1} \cap \rangespace{\mathbf{D}_2 \Ahl_1}} \\ 
  &= 2r - \dimm{\rangespace{\mathbf{D}_1 \Ahl_1} \cap \rangespace{\mathbf{D}_2 \Ahl_1}}.
\end{align}
The last term captures the interplay between the $\mathbf{D}_{\ns}$ and $\Ahl_1$. We want to find
conditions under which this intersection is trivial. As we assume $\mathbf{D}_{\ns}$ is full rank, we
can instead ask when $\rangespace{\Ahl_1} \cap \rangespace{\mathbf{D}_1^{-1} \mathbf{D}_2 \Ahl_1}$
is trivial. Loosely speaking, when is multiplication by a diagonal matrix enough to perturb a
subspace out of alignment with itself?

Next, we define a notion of uniqueness modulo the ISAM nullspace.
\begin{definition}
  The solution to $\sbl = \Phil \pbl$
  is said to be \emph{unique within the optical passband} if $\Phil \mathbf{x} = \Phil \mathbf{y}
  \implies \mathbf{x} - \mathbf{y} \in \nsetbar$.
  Equivalently, there is a unique $\pbl \in \nsetbarperp$ such that $\sbl=
  \Phil \pbl$. 
\end{definition} 
This definition sets up an equivalence relation on the spatial densities: we treat two spatial
densities as equivalent if their difference lies in $\nsetbar$, the null space of the ISAM matrices.
This is the component to which we are  inherently are blind even in the single species case.

Next, we cast the problem into a form where we implicitly work in the optical passband
$\nsetbarperp$. Let $\Vl \in \Cbb^{N_z \times r}$ be a basis for $\nsetperp$. We
introduce a new set of matrices: the \emph{restricted ISAM matrix} $\Bhl_{\nf} \triangleq
\Ahl_{\nf} \Vl \in \Cbb^{\Nk \times r}$ is the restriction of $\Ahl_{\nf}$ to
the subspace $\nsetperp$. Clearly, $\Bhl_{\nf}$ has full column rank. Similarly,
$\mathbf{I}_{\Nf} \otimes \Vl$ is a basis for $\nsetbarperp$. We define the
\emph{restricted \nsp matrix}
\begin{equation}
  \Phitl \triangleq \Phil (\mathbf{I}_{\Nf} \otimes \Vl) \in \Cbb^{\Nf \Nk \times \Ns r} .
\end{equation}
The question of unique recovery (within the optical passband) is determined entirely by this matrix,
as stated in the following result.
\begin{lemma}
  Let $\Phil \in \Cbb^{\Nf \Nk \times \Ns N_z}$ and $\rank{\Ahl_{\nf}} = r$ for $\nf \in [\Nf]$.
  The following statements are equivalent:
  \begin{enumerate}[label=(C\arabic*)] 
  \item There is a unique $\pbl \in \nsetbarperp$ such that $\sbl = \Phil
    \pbl$ \label{cond:unique}
  \item $\nullspace{\Phil} = \nsetbar$ \label{cond:nullspace}
  \item $\rank{\Phitl} = \Ns r$. \label{cond:rank}
  \end{enumerate}
  \label{thm:rank_uniqueness}
\end{lemma}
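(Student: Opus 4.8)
The plan is to prove the lemma as a chain of two equivalences, \ref{cond:unique}$\Leftrightarrow$\ref{cond:nullspace} and \ref{cond:nullspace}$\Leftrightarrow$\ref{cond:rank}, working throughout with the orthogonal decomposition $\Cbb^{\Ns N_z} = \nsetbar \oplus \nsetbarperp$ and the single structural fact already observed in the text, namely $\nsetbar \subseteq \nullspace{\Phil}$: each block of a vector in $\nsetbar$ lies in $\nset = \nullspace{\Ahl_{\nf}}$, so it is annihilated by every $\Ahl_{\nf}$ and hence by every $\mathbf{D}_{\ns}\Ahl_{\nf}$. The reverse inclusion is exactly what conditions \ref{cond:unique}--\ref{cond:rank} are designed to certify.

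For \ref{cond:unique}$\Leftrightarrow$\ref{cond:nullspace}, I would first dispose of existence, which is automatic for any achievable measurement $\sbl \in \rangespace{\Phil}$: writing $\sbl = \Phil\mathbf{x}$ and splitting $\mathbf{x} = \mathbf{x}_1 + \mathbf{x}_2$ with $\mathbf{x}_1 \in \nsetbar$, $\mathbf{x}_2 \in \nsetbarperp$, the inclusion $\nsetbar \subseteq \nullspace{\Phil}$ gives $\Phil\mathbf{x}_2 = \sbl$, so a solution in $\nsetbarperp$ always exists and only uniqueness is in question. Uniqueness, spelled out, is the implication $\Phil\mathbf{x} = \Phil\mathbf{y} \Rightarrow \mathbf{x}-\mathbf{y} \in \nsetbar$, i.e.\ $\nullspace{\Phil} \subseteq \nsetbar$; combined with the always-true reverse inclusion this is precisely $\nullspace{\Phil} = \nsetbar$, which is \ref{cond:nullspace}. (Equivalently, two $\nsetbarperp$-solutions differ by an element of $\nsetbarperp \cap \nullspace{\Phil}$, so uniqueness holds iff that intersection is trivial.)

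For \ref{cond:nullspace}$\Leftrightarrow$\ref{cond:rank}, the key is that $\mathbf{I}_{\Ns}\otimes\Vl \in \Cbb^{\Ns N_z \times \Ns r}$ has full column rank and range exactly $\nsetbarperp$, so $\Phitl = \Phil(\mathbf{I}_{\Ns}\otimes\Vl)$ is just $\Phil$ read off in a coordinate basis of the optical passband. Since $\Phitl$ has $\Ns r$ columns, condition \ref{cond:rank} is equivalent to $\nullspace{\Phitl} = \{\mathbf{0}\}$. I would then show $\nullspace{\Phitl} = \{\mathbf{0}\} \Leftrightarrow \nsetbarperp \cap \nullspace{\Phil} = \{\mathbf{0}\}$: because $\mathbf{I}_{\Ns}\otimes\Vl$ is injective with range $\nsetbarperp$, a nonzero $\mathbf{z}$ with $\Phitl\mathbf{z}=\mathbf{0}$ corresponds bijectively to a nonzero vector $(\mathbf{I}_{\Ns}\otimes\Vl)\mathbf{z} \in \nsetbarperp \cap \nullspace{\Phil}$. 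Finally I would close the loop with $\nsetbarperp \cap \nullspace{\Phil} = \{\mathbf{0}\} \Leftrightarrow \nullspace{\Phil} = \nsetbar$: the reverse direction is immediate from $\nsetbar \perp \nsetbarperp$, and for the forward direction I take $w \in \nullspace{\Phil}$, decompose $w = w_1 + w_2$ along $\nsetbar \oplus \nsetbarperp$, and note that $w_1 \in \nsetbar \subseteq \nullspace{\Phil}$ forces $w_2 = w - w_1 \in \nsetbarperp \cap \nullspace{\Phil} = \{\mathbf{0}\}$, so $w = w_1 \in \nsetbar$.

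The argument is entirely elementary linear algebra, so I do not expect a genuine obstacle, only bookkeeping. The step most prone to slip is the identification $\nullspace{\Phitl} \cong \nsetbarperp \cap \nullspace{\Phil}$, which requires using \emph{both} that $\mathbf{I}_{\Ns}\otimes\Vl$ is injective and that its range equals $\nsetbarperp$ (not merely that it is contained in it), together with invoking $\nsetbar \subseteq \nullspace{\Phil}$ at the right moment, since without that inclusion none of the set containments close. I would also verify the dimension count $\dimm{\nsetbar} = \Ns(N_z - r)$ and $\dimm{\nsetbarperp} = \Ns r$ so that ``full column rank'' of $\Phitl$ and $\rank{\Phitl} = \Ns r$ genuinely coincide.
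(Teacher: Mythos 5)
Your proof is correct and follows essentially the same route as the paper's: both rest on the decomposition $\Cbb^{\Ns N_z}=\nsetbar\oplus\nsetbarperp$, the fact that $\mathbf{I}_{\Ns}\otimes\Vl$ is an injective map with range $\nsetbarperp$, and rank--nullity applied to $\Phitl$. The only differences are organizational (you prove two biconditionals where the paper runs the cycle \cref{cond:unique}$\,\Rightarrow\,$\cref{cond:nullspace}$\,\Rightarrow\,$\cref{cond:rank}$\,\Rightarrow\,$\cref{cond:unique}) and that you spell out the inclusion $\nsetbar\subseteq\nullspace{\Phil}$ and the existence half of \cref{cond:unique}, which the paper leaves tacit.
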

We defer the proof to \cref{sec:proofs}.

We can construct the restricted \nsp matrix $\Phitl$ using the Khatri-Rao product. Let $\Bbarl \in \Cbb^{\Nf
  \Nk \times r}$ be the matrix formed by stacking the restricted ISAM matrices $\Bhl_{\nf}$ into a
single block column: $\Bbarl \triangleq [(\Bhl_1)^{\mathsf{T}}, \hdots,
(\Bhl_{\Nf})^{\mathsf{T}}]^{\mathsf{T}}$. Recall $\Hbar = (\ones{\Nf}^{\mathsf{T}} \otimes \mathbf{H}) =
[\mathbf{H}^{\mathsf{T}}, \hdots, \mathbf{H}^{\mathsf{T}}] \in \Cbb^{\Nf \Nk \times \Ns}$. Now,
\begin{align}
  \Phitl =  \Phil (\mathbf{I}_{\Nf} \otimes \Vl) = 
  \begin{bmatrix}
    \mathbf{\mathbf{D}}_1 \Ahl_{1} \Vl &  \hdots & \mathbf{\mathbf{D}}_{\Ns} \Ahl_{1} \Vl \\
    \vdots       & \ddots & \vdots \\
    \mathbf{D}_1 \Ahl_{\Nf} \Vl &  \hdots & \mathbf{D}_{\Ns} \Ahl_{\Nf} \Vl \\
  \end{bmatrix} =
  \begin{bmatrix}
    \mathbf{\mathbf{D}}_1 \Bhl_1 &  \hdots & \mathbf{\mathbf{D}}_{\Ns} \Bhl_1 \\
    \vdots       &  \ddots & \vdots \\
    \mathbf{D}_1 \Bhl_{\Nf} &  \hdots & \mathbf{D}_{\Ns} \Bhl_{\Nf} \\
  \end{bmatrix}
    = \Hbar \odot \Bbarl,
  \label{eq:phi_tilde_block}
\end{align}
mirroring the construction of $\Phil$ in \cref{sec:khatri-rao}.

In what follows, we establish necessary and sufficient conditions for uniqueness within the optical passband.

\subsubsection{Necessary Conditions for Uniqueness}
\begin{figure}
  \centering
  %% Creator: Inkscape inkscape 0.92.5, www.inkscape.org
%% PDF/EPS/PS + LaTeX output extension by Johan Engelen, 2010
%% Accompanies image file 'necc_suff_cond_drawing.pdf' (pdf, eps, ps)
%%
%% To include the image in your LaTeX document, write
%%   \input{<filename>.pdf_tex}
%%  instead of
%%   \includegraphics{<filename>.pdf}
%% To scale the image, write
%%   \def\svgwidth{<desired width>}
%%   \input{<filename>.pdf_tex}
%%  instead of
%%   \includegraphics[width=<desired width>]{<filename>.pdf}
%%
%% Images with a different path to the parent latex file can
%% be accessed with the `import' package (which may need to be
%% installed) using
%%   \usepackage{import}
%% in the preamble, and then including the image with
%%   \import{<path to file>}{<filename>.pdf_tex}
%% Alternatively, one can specify
%%   \graphicspath{{<path to file>/}}
%% 
%% For more information, please see info/svg-inkscape on CTAN:
%%   http://tug.ctan.org/tex-archive/info/svg-inkscape
%%
\begingroup%
  \makeatletter%
  \providecommand\color[2][]{%
    \errmessage{(Inkscape) Color is used for the text in Inkscape, but the package 'color.sty' is not loaded}%
    \renewcommand\color[2][]{}%
  }%
  \providecommand\transparent[1]{%
    \errmessage{(Inkscape) Transparency is used (non-zero) for the text in Inkscape, but the package 'transparent.sty' is not loaded}%
    \renewcommand\transparent[1]{}%
  }%
  \providecommand\rotatebox[2]{#2}%
  \newcommand*\fsize{\dimexpr\f@size pt\relax}%
  \newcommand*\lineheight[1]{\fontsize{\fsize}{#1\fsize}\selectfont}%
  \ifx\svgwidth\undefined%
    \setlength{\unitlength}{258.15094234bp}%
    \ifx\svgscale\undefined%
      \relax%
    \else%
      \setlength{\unitlength}{\unitlength * \real{\svgscale}}%
    \fi%
  \else%
    \setlength{\unitlength}{\svgwidth}%
  \fi%
  \global\let\svgwidth\undefined%
  \global\let\svgscale\undefined%
  \makeatother%
  \begin{picture}(1,0.40219492)%
    \lineheight{1}%
    \setlength\tabcolsep{0pt}%
    \put(0,0){\includegraphics[width=\unitlength,page=1]{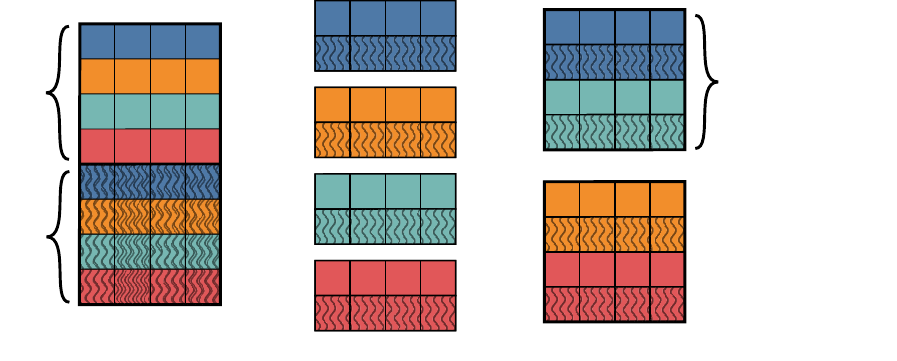}}%
    \put(-0.00192928,0.2935401){\color[rgb]{0,0,0}\makebox(0,0)[lt]{\lineheight{1.25}\smash{\begin{tabular}[t]{l}$\Bhl_1$\end{tabular}}}}%
    \put(-0.00192928,0.13176085){\color[rgb]{0,0,0}\makebox(0,0)[lt]{\lineheight{1.25}\smash{\begin{tabular}[t]{l}$\Bhl_2$\end{tabular}}}}%
    \put(0.80851272,0.29996892){\color[rgb]{0,0,0}\makebox(0,0)[lt]{\lineheight{1.25}\smash{\begin{tabular}[t]{l}$\mathbf{C}_1$\end{tabular}}}}%
    \put(0,0){\includegraphics[width=\unitlength,page=2]{necc_suff_cond_drawing.pdf}}%
    \put(0.80851272,0.1098064){\color[rgb]{0,0,0}\makebox(0,0)[lt]{\lineheight{1.25}\smash{\begin{tabular}[t]{l}$\mathbf{C}_2$\end{tabular}}}}%
    \put(0.15099251,-0.00274641){\color[rgb]{0,0,0}\makebox(0,0)[lt]{\lineheight{1.25}\smash{\begin{tabular}[t]{l}(a)\end{tabular}}}}%
    \put(0.41342357,-0.00274641){\color[rgb]{0,0,0}\makebox(0,0)[lt]{\lineheight{1.25}\smash{\begin{tabular}[t]{l}(b)\end{tabular}}}}%
    \put(0.67018104,-0.00274641){\color[rgb]{0,0,0}\makebox(0,0)[lt]{\lineheight{1.25}\smash{\begin{tabular}[t]{l}(c)\end{tabular}}}}%
  \end{picture}%
\endgroup%

  \caption{Comparing \cref{nec:meas_diversity} and \cref{thm:n_species_suff} for $\Nf=2$ and $r=4$.
    (a) The matrix $\Bbarl$. Color denotes the value of $\ko$. Rows with solid (\emph{resp.}
    wave-patterned) blocks correspond to measurements at the first (\emph{resp.} second) focal
    plane. (b) Condition \cref{nec:meas_diversity} requires that the sum of the ranks of each $2
    \times 4$ block of the same color must be at least $4 \Ns$. (c) A possible partitioning of the
    rows of $\Bbarl$ as described in \cref{thm:n_species_suff}. If both $\mathbf{C}_1$ and
    $\mathbf{C}_2$, as defined in \cref{eq:C_suff_mat}, have full rank for generic chemical species
    the solution to $\sbl = \Phil \pbl$ is unique within the optical passband with probability one.
  }
  \label{fig:necc_suff_cond}
\end{figure}

\begin{theorem}
  The solution to $\sbl = \Phil \pbl$ is unique within the optical passband
  only if
  \begin{enumerate}[label=(N\arabic*)] 
    \item $\Nk \Nf \geq \Ns r$  \label{nec:dimensions}
    \item The spectral profiles are linearly independent ($\rank{\mathbf{H}} = \Ns$)  \label{nec:independent}
    \item No row of $\Bbarl$ is orthogonal to all remaining rows  \label{nec:orthogonality}
    \item For every subset 
      $J \subset [\Nk]$ with $\Ns \leq \abs{J} < \Ns r / \Nf$ and $\rank{\mathbf{H}^J} = \Ns$, we have
      $\rank{\mathbf{H}^{J^c}} \geq \Ns - \frac{\Nf}{r} \abs{J}$  \label{nec:spec_diversity}
    \item 
      $\sum_{i=1}^{\Nk} \rank{\left[\Bhl_1[i, \colon]^{\mathsf{T}}, \hdots, \Bhl_{\Nf}[i,
          \colon]^{\mathsf{T}}\right]^{\mathsf{T}}} \geq \Ns r$. \label{nec:meas_diversity}
  \end{enumerate}
  \label{thm:n_species_necc}
\end{theorem}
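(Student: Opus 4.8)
The plan is to lean on \cref{thm:rank_uniqueness}, which collapses "uniqueness within the optical passband" to the single algebraic statement $\rank{\Phitl} = \Ns r$, i.e.\ that $\Phitl$ has full column rank. I would assume full column rank throughout and derive each of \cref{nec:dimensions,nec:independent,nec:orthogonality,nec:spec_diversity,nec:meas_diversity} in turn, organizing everything around the Khatri--Rao representation $\Phitl = \Hbar \odot \Bbarl$ of \cref{eq:phi_tilde_block} together with the fact, established just above it, that each $\Bhl_{\nf}$ and hence the stack $\Bbarl$ has full column rank $r$. Two elementary observations do most of the work: the row of $\Phitl$ indexed by focal plane $\nf$ and wavenumber $m$ is the rank-one tensor $\mathbf{H}[m,\colon] \otimes \Bhl_{\nf}[m,\colon]$, and the rank of a matrix is subadditive under any partition of its rows.

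Conditions \cref{nec:dimensions} and \cref{nec:independent} are quick. The former is immediate: full column rank forces the row count $\Nf \Nk$ to be at least the column count $\Ns r$. For \cref{nec:independent} I would argue by contradiction. If $\rank{\mathbf{H}} < \Ns$, choose $\mathbf{c} \neq \mathbf{0}$ with $\mathbf{H}\mathbf{c} = \mathbf{0}$ and any $\mathbf{v} \neq \mathbf{0}$ in $\Cbb^r$, and set $\mathbf{x}_{\ns} = c_{\ns}\mathbf{v}$. A one-line computation shows the $(\nf,m)$ entry of $\Phitl\mathbf{x}$ equals $\bigl(\Bhl_{\nf}[m,\colon]\mathbf{v}\bigr)\,(\mathbf{H}\mathbf{c})[m] = 0$, so $\mathbf{x}$ is a nonzero null vector, contradicting full column rank.

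Both \cref{nec:meas_diversity} and \cref{nec:spec_diversity} follow from bounding the row space of $\Phitl$ by subadditivity. Grouping rows by wavenumber $m$, the rows with a fixed $m$ span $\mathbf{H}[m,\colon] \otimes W_m$, where $W_m = \spanspace{\Bhl_{\nf}[m,\colon] : \nf \in [\Nf]}$ has dimension equal to the $i=m$ summand of \cref{nec:meas_diversity}. Summing over $m$ gives $\rank{\Phitl} \leq \sum_m \dim W_m$, and full rank yields \cref{nec:meas_diversity}. For \cref{nec:spec_diversity} I would instead split the rows according to whether their wavenumber lies in $J$ or in $J^c$: the $J$ block has at most $\Nf\abs{J}$ rows, while every row of the $J^c$ block lies in $\rowspace{\mathbf{H}^{J^c}} \otimes \Cbb^r$, which has dimension $r\,\rank{\mathbf{H}^{J^c}}$. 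Subadditivity then gives $\Ns r = \rank{\Phitl} \leq \Nf\abs{J} + r\,\rank{\mathbf{H}^{J^c}}$, and rearranging produces exactly the stated inequality. This derivation in fact holds for arbitrary $J$; the restriction $\abs{J} < \Ns r/\Nf$ is precisely what makes the right-hand side positive, so that the listed instances are the non-vacuous ones.

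The orthogonality condition \cref{nec:orthogonality} is the step I expect to be the main obstacle, since it connects a metric hypothesis to a rank conclusion. If some row $\Bbarl[i_0,\colon]$ were orthogonal to all the others, then because $\Bbarl$ has full column rank $r$ this row cannot lie in the span of the remaining rows, so it spans an orthogonal-complement direction. I would make this explicit by right-multiplying $\Bbarl$ by an invertible $\mathbf{Q} \in \Cbb^{r \times r}$ (a change of basis on $\nsetperp$) that rotates this direction onto the last coordinate; the mixed-product property gives $\Hbar \odot (\Bbarl\mathbf{Q}) = \Phitl(\mathbf{I}_{\Ns} \otimes \mathbf{Q})$, which has the same rank as $\Phitl$. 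After the rotation the last column of $\Bbarl\mathbf{Q}$ is supported only on row $i_0$, so the $\Ns$ columns of $\Phitl(\mathbf{I}_{\Ns} \otimes \mathbf{Q})$ attached to that coordinate are all scalar multiples of a single standard basis vector, hence linearly dependent for $\Ns \geq 2$ and contradicting full column rank. The delicate bookkeeping is tracking the Khatri--Rao column indexing through the basis change and confirming that full column rank of $\Bbarl$ is exactly what keeps the rotated column genuinely nonzero.
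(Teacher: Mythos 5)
Your proposal is correct and follows the same overall strategy as the paper: reduce to $\rank{\Phitl} = \Ns r$ via \cref{thm:rank_uniqueness}, then bound the rank of the Khatri--Rao product by partitioning its rows. The differences are in packaging rather than substance. Where the paper invokes a separate lemma bounding $\rank{\mathbf{A}\odot\mathbf{B}}$ by $\rank{\mathbf{A}}\,\rank{\mathbf{B}}$ to handle \cref{nec:independent,nec:spec_diversity,nec:meas_diversity}, you count dimensions of tensor-product row spaces directly; your version of \cref{nec:spec_diversity} is in fact slightly cleaner, since it avoids the paper's case analysis on the two $\min$ terms and yields the inequality for \emph{all} $J$, not only those with $\rank{\mathbf{H}^J}=\Ns$. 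For \cref{nec:orthogonality} the paper exhibits an explicit null vector built from the offending row (with blocks $-\alpha\mathbf{x}, \mathbf{x}, \ldots, \mathbf{x}$), whereas you rotate that row onto a coordinate axis and observe that the $\Ns$ columns of $\Phitl(\mathbf{I}_{\Ns}\otimes\mathbf{Q})$ attached to that coordinate collapse onto a single standard basis vector; these are the same null vector in different coordinates. One small caution on that step: what prevents the offending row from lying in the span of the remaining rows is not full column rank of $\Bbarl$ but simply that a nonzero vector orthogonal to a subspace containing it would be self-orthogonal. Both your argument and the paper's silently assume the orthogonal row is nonzero (a zero row is vacuously orthogonal to everything, and then neither construction produces a rank deficiency) and that $\Ns \geq 2$, so this imprecision does not put you at any disadvantage relative to the published proof.
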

  We defer the proof to \cref{sec:proofs}. Let us pause to interpret these conditions.
  
  In the single-species case, \cref{nec:dimensions} reduces to $N_k \geq r$; \ie we must measure
  enough wavenumbers such that the single-species ISAM problem is well-posed. Interestingly,
  \cref{nec:dimensions} does not require that $\Nf \geq \Ns$: recovery of $\Ns$ species is possible
  from a single focal plane, provided the measurements are oversampled in wavenumber. This behavior
  can be seen in the numerical experiments described in \cref{sec:stability}

  Condition \cref{nec:independent} is unsurprising. If the spectral profiles are linearly dependent,
  the \nsp representation of a susceptibility is not unique and the spatial densities cannot be
  uniquely determined.
  
  Condition \cref{nec:orthogonality} is less transparent, but can be argued to hold by the
  underlying physics. If \cref{nec:orthogonality} is violated, there must be an object that scatters
  at only one of the measured wavenumbers and is non-scattering for the rest. In the continuous
  setting, scattered fields are analytic functions of $\ko$; thus if an object is non-scattering
  over an interval of wavenumbers, it must be non-scattering for all $\ko$
  \cite{Wolf1985-Analy-Angul, Devaney1978-Nonun-Inver}.
  In the discretized setting we lose the analytic properties of scattered waves.
  In numerical simulations, however, condition \cref{nec:orthogonality} holds.

  Condition \cref{nec:spec_diversity} requires the spectral profiles to be sufficiently diverse:
  linear independence is not enough. As an example, consider $\Ns = 2, \Nf=1$, and take
  $\mathbf{h}_1 = [1, 1, \hdots, 1]^{\mathsf{T}}$ and $\mathbf{h}_2 = [2, 1, \hdots,
  1]^{\mathsf{T}}$. These spectra are linearly independent, but $\mathbf{D}_1 \Ahl_{1}$ and
  $\mathbf{D}_2 \Ahl_1$ differ by only one row; thus $\rank{\Phitl} \leq r + 1$, failing
  \cref{cond:rank} of \cref{thm:rank_uniqueness}. Spectral diversity is necessary to push
  $\rangespace{\mathbf{D}_1 \Ahl_1}$ out of alignment with $\rangespace{\mathbf{D}_2 \Ahl_1}$.
  ``Good'' spectral profiles are not too concentrated on any small set of indices.

%  \todoc{yoram: you
%    wanted expansion on the case $r = 1$ as it pertains to this example and condition N4. N4 doesn't
%    apply in this case; the requirement is $\Ns \leq \abs{J} < \Ns r / N_F$, so for this example
%  $\Ns \leq \abs{J} < \Ns$.  The requirement doesn't apply. The case of $r=1$ falls outside of the
%  'tomography' view point, as we won't get any depth resolution.}
  
  The final condition, \cref{nec:meas_diversity}, is a requirement on the diversity of measurements
  comprising the restricted ISAM matrices. When $\Nk \Nf = \Ns r$, \cref{nec:meas_diversity} requires
  that the collection of measurement vectors corresponding to a given wavenumber be linearly
  independent: each new focal plane must provide new and informative measurements. This partitioning
  is illustrated in \cref{fig:necc_suff_cond}.

\subsubsection{Sufficient Condition for Uniqueness}
\label{sec:suff}
First, we note that no conditions on $\Bbarl$ or $\mathbf{H}$ independently are sufficient
to ensure there is a unique solution within the optical passband. Consider again
the two-species, one focal plane case: $\Phitl =
[\mathbf{D}_1 \Bhl_1, \mathbf{D}_2 \Bhl_1]$, with $\mathbf{D}_i = \diag{\left( \mathbf{h}_i
  \right)}$. Suppose $\mathbf{h}_1$ is fixed and choose vectors $\mathbf{w}, \mathbf{v} \in \Cbb^r$
such that no entry of $\Bhl_1 \mathbf{v}$ is zero. Set $\mathbf{h}_2 = (\mathbf{D}_1 \Bhl_1 \mathbf{w})
/ (\Bhl_1 \mathbf{v})$ where the division is taken elementwise. With this construction, $\mathbf{D}_2
\Bhl_1 \mathbf{v} = \mathbf{D}_1 \Bhl_1 \mathbf{w}$, and thus $\rank{\Phitl} \leq 2 r - 1$, failing
\cref{cond:rank} of \cref{thm:rank_uniqueness}.

These spectral profiles were carefully chosen to make $\Phitl$ lose rank.
Fortunately, we are unlikely to encounter such objects in practice.  The following definition makes
this argument precise.
\begin{definition}
  A property that depends on the spectral profiles $\mathbf{H} \in \Cbb^{\Nk \times \Ns}$ is said to
  hold \emph{generically}, or \emph{for generic $\mathbf{H}$}, 
  if the set for which it fails to hold has Lebesgue measure zero and is
  nowhere dense in $\Cbb^{\Nk \times \Ns}$.
\end{definition}
If a property that holds generically, it holds with probability one if the spectral profiles are
drawn independently from a distribution that is absolutely continuous with respect to the Lebesgue
measure in $\Cbb^{\Nk \times \Ns}$; for instance, when the entries of $\mathbf{H}$ are drawn i.i.d.
from the Gaussian distribution. Moreover, the property exhibits a degree of robustness: if it holds
for a particular $\mathbf{H}^\prime$, then it holds in an open ball around $\mathbf{H}^\prime$ and
will continue to hold given sufficiently small perturbations to $\mathbf{H}^\prime$.

\begin{theorem}
  Suppose $\Nk \geq r$ and $\Nf \geq \Ns$.  If there exists a
  collection $\left\{ J_i \subset [\Nk] \right\}_{i=1}^{\Nf}$ of disjoint sets, each of cardinality
  $\abs{J_i} = r / \Nf$, such that
  \begin{equation}
    \mathbf{C}_i \triangleq \begin{bmatrix}
      \Bhl_1[J_i, \colon] \\
      \vdots \\
      \Bhl_{\Nf}[J_i, \colon]
    \end{bmatrix}
    \in \Cbb^{r \times r}
    \label{eq:C_suff_mat}
  \end{equation}
  is full rank for each $i \in [\Nf]$, then
  for generic $\mathbf{H}$ the solution to $\sbl = \Phil \pbl$ is unique
  within the optical passband.
  \label{thm:n_species_suff}
\end{theorem}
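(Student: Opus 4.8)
The plan is to reduce the statement, via \cref{thm:rank_uniqueness}, to a single rank computation and then to establish that rank by a genericity argument anchored on one explicit spectral matrix. By \cref{cond:rank} of \cref{thm:rank_uniqueness}, uniqueness within the optical passband is equivalent to $\rank{\Phitl} = \Ns r$. Each entry of $\Phitl = \Hbar \odot \Bbarl$ is linear in the entries of $\mathbf{H}$: the $(\nf,m)$-th row is $\mathbf{H}[m,\colon]\otimes\Bhl_{\nf}[m,\colon]$, so every entry is a fixed coefficient of $\Bbarl$ times a single entry of $\mathbf{H}$. Consequently each $\Ns r \times \Ns r$ minor of $\Phitl$ is a holomorphic polynomial in the entries of $\mathbf{H}$, and $\rank{\Phitl} < \Ns r$ exactly when all such minors vanish, which defines a complex-algebraic subvariety of $\Cbb^{\Nk \times \Ns}$. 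If even one minor is not the identically-zero polynomial, this subvariety is proper, and a proper complex-algebraic variety has real codimension at least two, hence Lebesgue measure zero and empty interior, matching the definition of genericity used here. The entire proof therefore reduces to exhibiting one $\mathbf{H}^\star$ with $\rank{\Phitl(\mathbf{H}^\star)} = \Ns r$.

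To build $\mathbf{H}^\star$ I would exploit the block factorization behind \cref{eq:phi_tilde_block}. Fix the focal-plane-major row ordering inherited from $\Bbarl$ and $\Hbar$, and consider the rows of $\Phitl$ whose wavenumber index $m$ lies in one set $J_i$, collected over all $\Nf$ focal planes; since $|J_i| = r/\Nf$ there are exactly $\Nf|J_i| = r$ of them. A short computation shows that, on these rows, the $\ns$-th block-column of $\Phitl$ equals $(\mathbf{I}_{\Nf}\otimes\mathbf{D}_{i,\ns})\mathbf{C}_i$, where $\mathbf{D}_{i,\ns} \triangleq \diag{\left(\mathbf{H}[J_i,\ns]\right)}$ and $\mathbf{C}_i$ is the matrix of \cref{eq:C_suff_mat}. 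This is the key structural fact: the full-rank factor $\mathbf{C}_i$ recurs in every block-column of the $J_i$ row-slice, modulated only by the diagonal entries of $\mathbf{H}$ restricted to $J_i$.

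The construction is then a selector. Because $\Nf \ge \Ns$, at least $\Ns$ disjoint sets are available, so I would use $J_1,\dots,J_{\Ns}$ and choose $\mathbf{H}^\star$ so that on the rows of $J_i$ only species $i$ is active: set $\mathbf{H}^\star[J_i,i] = 1$ and $\mathbf{H}^\star[J_i,\ns] = 0$ for $\ns\neq i$, with entries on wavenumbers outside $\bigcup_{i=1}^{\Ns} J_i$ set to zero (this is precisely where $\Nf \ge \Ns$ is used—the remaining sets are discarded). With this choice $\mathbf{D}_{i,\ns} = \delta_{i\ns}\mathbf{I}$, so the $J_i$ row-slice has $\mathbf{C}_i$ in block-column $i$ and zeros elsewhere. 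Because the $J_i$ are disjoint, stacking the row-slices for $i=1,\dots,\Ns$ selects $\Ns r$ genuinely distinct rows of $\Phitl(\mathbf{H}^\star)$ that, after a harmless row permutation, form the block-diagonal matrix $\blkdiag(\mathbf{C}_1,\dots,\mathbf{C}_{\Ns})$. This submatrix is invertible since each $\mathbf{C}_i$ is full rank by hypothesis, giving $\rank{\Phitl(\mathbf{H}^\star)} \ge \Ns r$; the reverse bound is immediate from the $\Ns r$ columns, so equality holds. Feeding this nonvanishing minor into the genericity argument of the first paragraph completes the proof.

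I expect the main obstacle to be the bookkeeping of the second step: verifying the factorization $(\mathbf{I}_{\Nf}\otimes\mathbf{D}_{i,\ns})\mathbf{C}_i$ under the Khatri--Rao row ordering, and confirming that disjointness of the $J_i$ truly yields $\Ns r$ distinct rows, so that what is produced is a bona fide nonzero $\Ns r \times \Ns r$ minor rather than merely a rank lower bound at a degenerate point. The genericity step itself is routine once such a minor is in hand.
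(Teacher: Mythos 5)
Your proposal is correct and follows essentially the same route as the paper: reduce to $\rank{\Phitl} = \Ns r$ via \cref{thm:rank_uniqueness}, observe that rank deficiency is a polynomial condition in the entries of $\mathbf{H}$, and exhibit the same selector spectra ($\mathbf{H}[J_i,i]$ an indicator of $J_i$, zero elsewhere) that collapse $\Phitl$ to $\mathrm{blkdiag}(\mathbf{C}_1,\dots,\mathbf{C}_{\Ns})$ after a row permutation. The only cosmetic difference is that you argue with $\Ns r \times \Ns r$ minors of the rectangular matrix directly, whereas the paper first reduces to the square case $\Nk = r$, $\Nf = \Ns$ and works with the determinant; your key structural identity $(\mathbf{I}_{\Nf}\otimes\mathbf{D}_{\ns}[J_i,J_i])\mathbf{C}_i$ is exactly the factorization the paper obtains after permuting rows.
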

An illustration of the matrices $\mathbf{C}_i$ is shown in \cref{fig:necc_suff_cond}(c). Note that
the necessary condition \cref{nec:meas_diversity} coincides with the sufficient condition of
\cref{thm:n_species_suff} in the case of $\Nk = \Nf = r = \Ns$, which is the limit of scanning
confocal spectroscopic acquisition discussed in \cref{sec:intro}.

\cref{thm:n_species_suff} can be stated in terms of a more
familiar, but more restrictive, property on $\Bbarl$.
\begin{definition}
  The \emph{Kruskal (row) rank} of a matrix $\mathbf{X} \in \Cbb^{n \times m}$, written
  $\krank{\mathbf{X}}$,
  is the largest $k$ such that every
  set of $k$ rows of $\mathbf{X}$ are linearly independent.  The matrix $\mathbf{X}$ is said to have
  \emph{full Kruskal rank} if $\krank{\mathbf{X}} = \max\left\{n, m\right\}$. 
\end{definition}
\begin{corollary}
  If $\Bbarl \in \Cbb^{\Nk \Nf \times r}$ has full Kruskal rank, then
  for generic $\mathbf{H}$ the solution to $\sbl = \Phil \pbl$ is unique within the optical passband.
\end{corollary}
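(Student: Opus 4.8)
The plan is to derive the corollary directly from \cref{thm:n_species_suff}, by showing that full Kruskal rank of $\Bbarl$ automatically supplies the partition the theorem requires. First I would unpack the definition in the present setting: since $\Bbarl \in \Cbb^{\Nk\Nf \times r}$ has only $r$ columns, its rows lie in $\Cbb^r$ and any $r+1$ of them are necessarily dependent, so here full Kruskal rank means $\krank{\Bbarl} = r$ --- that is, \emph{every} collection of $r$ distinct rows of $\Bbarl$ is linearly independent and hence forms a nonsingular $r \times r$ matrix.

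Next I would construct the disjoint sets demanded by \cref{thm:n_species_suff}. Because $\Nk \geq r$, the index set $[\Nk]$ contains at least $r$ elements; I would select $r$ of them and split them into $\Nf$ disjoint subsets $J_1, \ldots, J_{\Nf} \subset [\Nk]$, each of cardinality $r/\Nf$ (this uses the divisibility $\Nf \mid r$ implicit in the theorem's construction; when it fails one appeals to the obvious variant of \cref{thm:n_species_suff} with block sizes merely summing to $r$, which the Kruskal hypothesis renders immaterial). The key observation is then purely combinatorial: the matrix $\mathbf{C}_i$ of \cref{eq:C_suff_mat} is assembled by stacking the rows $\Bhl_{\nf}[j,\colon]$ over $\nf \in [\Nf]$ and $j \in J_i$. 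Viewed inside $\Bbarl$, these are exactly the rows with global indices $(\nf-1)\Nk + j$, which are pairwise distinct (distinct block indices $\nf$, and within a fixed block distinct $j \in J_i$). Thus each $\mathbf{C}_i$ is a submatrix of $\Bbarl$ consisting of precisely $r$ distinct rows, so by full Kruskal rank it is nonsingular.

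With every $\mathbf{C}_i$ full rank, and the hypotheses $\Nk \geq r$ and $\Nf \geq \Ns$ carried over from the theorem, all requirements of \cref{thm:n_species_suff} are met, and its conclusion --- uniqueness within the optical passband for generic $\mathbf{H}$ --- follows immediately. I expect this reduction to be routine rather than delicate.

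The only points requiring care are the bookkeeping that identifies the rows of $\mathbf{C}_i$ with \emph{distinct} rows of $\Bbarl$ (so that the Kruskal-rank hypothesis genuinely applies) and the minor matter of the divisibility $\Nf \mid r$ when forming equal-size blocks. Neither is a true obstacle: full Kruskal rank is precisely the statement that \emph{any} admissible choice of disjoint blocks yields nonsingular $\mathbf{C}_i$, so the existence requirement of \cref{thm:n_species_suff} is satisfied by an arbitrary selection rather than one that must be carefully engineered. This is exactly why the corollary trades the sharp, tailored hypothesis of \cref{thm:n_species_suff} for the simpler but stronger and more familiar full-Kruskal-rank condition.
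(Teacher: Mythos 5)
Your proposal is correct and is exactly the intended derivation: the paper states the corollary without proof immediately after \cref{thm:n_species_suff}, and the implicit argument is precisely your reduction---full Kruskal rank (i.e.\ $\krank{\Bbarl}=r$, reading the paper's $\max$ in the definition as the attainable $\min$) makes every admissible choice of disjoint index sets $J_i$ yield an invertible $\mathbf{C}_i$, since each $\mathbf{C}_i$ is a stack of $r$ distinct rows of $\Bbarl$. Your attention to the row bookkeeping and the divisibility $\Nf \mid r$ is appropriate and consistent with the theorem's own proof, which assumes $r = m\Nf$.
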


\subsubsection{Related Problems}
\label{sec:related_kr}
The Khatri-Rao structure of $\Phi^\qp$ provides a link between the \nsp inverse problem and topics in
tensor factorization, communications, and sensor networks, among others
\cite{Sidiropoulos2000-Uniquen-Multil, Khanna2019-Correc-To, Khanna2018-Restr-Isomet,
  Fengler2019-Restr-Isomet, Allman2009-Ident-Param, Bhaskara2014-Smoot, Gamal2015-When}. For
example, the rank and Kruskal rank of the Khatri-Rao product has implications for the uniqueness of
certain tensor factorizations. Properties of the Khatri-Rao product are an active area of research.
For generic matrices $\mathbf{X}$ and $\mathbf{Y}$, it is known that $\krank{\mathbf{X} \odot
  \mathbf{Y}} = \krank{\mathbf{X}} \krank{\mathbf{Y}}.$ Bhaskara \etal provide bounds on the
smallest singular value of the Khatri-Rao product of random matrices \cite{Bhaskara2014-Smoot}.
Recent work has investigated the restricted isometry property of the Khatri-Rao product of random
matrices \cite{Khanna2019-Correc-To, Khanna2018-Restr-Isomet, Fengler2019-Restr-Isomet}.

These results do not directly apply to the problem at hand. We are interested in properties
of $\Phitl = \Hbar \odot \Bbarl$. As $\Bbarl$ is determined by the physics and imaging geometry, we
cannot choose this matrix generically or randomly. Even $\mathbf{\bar{H}}$ cannot be chosen
generically, as $\mathbf{\bar{H}} = (\ones{\Nf}^{\mathsf{T}} \otimes \mathbf{H})$; only the matrix
$\mathbf{H}$ can be chosen generically. Translating new results on the Khatri-Rao product to our
setting remains a topic for further investigation.

\subsection{Stability And Conditioning of \cref{enum:linear}}
\label{sec:stability}
The results of the previous section tell us that the solution to $\mathbf{\bar{s}}^\qp = \Phi^\qp \pbl$
is almost always unique (within the optical passband), but say little about the stability of the
problem. We must always deal with ``noisy'' measurements-- not just instrumentation noise, but also
``noise'' due to modeling error, \eg multiple scattering and spatial-spectral coupling not captured
by the \nsp model.

% It is difficult to obtain insightful analytical bounds on the stability of \cref{enum:linear}. In
% the two species case, the stability of the problem depends on the separation between the subspaces
% $\rangespace{\mathbf{\bar{D}}_1 \mathbf{\bar{A}}}$ and $\rangespace{\mathbf{\bar{D}}_2
%   \mathbf{\bar{A}}}$.  
% \todoc{link back to KR,\cite{Bhaskara2018-Smoot-Analy, Bhaskara2014-Smoot} }

In this section, we numerically investigate the behavior of the singular values of the \nsp matrix
$\Phi$ for the case three-species case $(\Ns = 3)$ in two spatial dimensions. We use the
computational parameters listed in \cref{tab:point_scatter_params}, except for $\na$ and $\Nf$,
which vary. The singular values of the ISAM matrix formed using these computational parameters were
investigated in \cref{sec:isam_properties} and plotted in \cref{fig:sv_A_plot}. The spectral
profiles used---caffeine, acetaminophen, and warfarin---are shown in \cref{fig:points_spectra}.

We computed the singular values of each block-matrix $\Phi^\qp$ \cref{eq:phi_subblock} and plot the
results in \cref{fig:sv_vs_na_zf_fig}. Recall
that the continuous Fourier frequency $k_x$ is determined from the DFT index $q_x$
using \cref{eq:dft_to_ft_coord}. As expected, higher transverse spatial frequencies
are present as $\na$ increases. Only the first $\Nf r_e$ singular values are appreciable. The
low-frequency components achieve rank $3 r_e$ for $\Nf = 3$, and adding focal planes improves the
conditioning of $\Phi$. Note that even in the case of a single focal plane, the $3 r_e$-th singular
value of $\Phi^0$ is non-zero; as previously discussed, $\Nf \geq \Ns$ is not necessary for a unique
solution.

We investigated the singular values of the block corresponding to $k_x=0$ for a variety of chemical
species and a varying number of focal planes. We used a library of $20$ experimentally acquired
chemical spectra\footnote{These include caffeine, acetaminophen, warfarin, monosodium glutamate (MSG), sucrose,
  naproxen, potassium chlorate, polyvinylidene fluoride (PVDF), aspartame, lactose, melatonin,
  ethylenediaminetetraacetic acid (EDTA), creatine, diazepam, biotin, fructose, pectin, glycine,
  beta carotene, hydroxypropyl cellulose.} provided through the IARPA SILMARILS project. We randomly
selected three species from the library, formed $\Phi^{0}$, and computed the singular values of
this matrix. We scaled $\Phi^{0}$ to have unit spectral norm. This procedure was repeated for
$200$ realizations. The resulting singular values are plotted in \cref{fig:sv_ensemble}; the borders
of the shaded region are the best and worst realizations for each choice of $\Nf$.

We repeated the same procedure using random spectral profiles. The real part of the spectral profile
was drawn i.i.d. from the standard normal distribution and the imaginary part was chosen uniformly
over $[0, 1]$. The results are plotted in \cref{fig:sv_ensemble}. Clearly, these un-physical spectra
lead to better conditioned $\Phi^0$, and there is little difference in the best and worst
realizations. Study of the system using random spectral profiles may lead to a useful upper bound on
system performance.

 \begin{figure}[t]
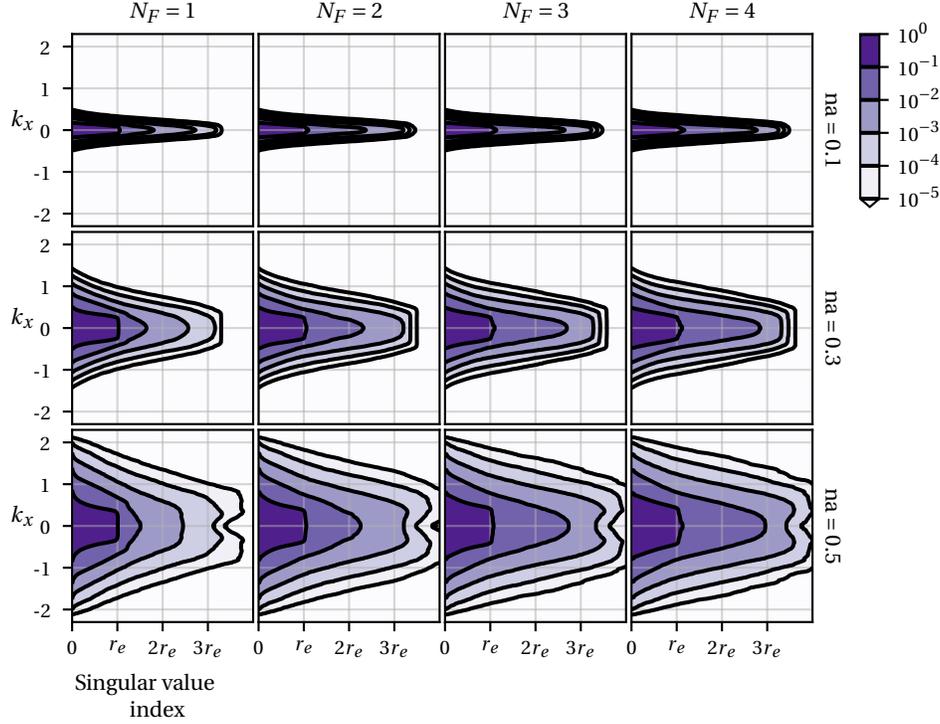

   \centering
   \inputpgf{figs/sv_grid}{sv_one_species_grid.pgf}
   \caption{Singular values of $\Phi^{q}$ as a function of $k_x(q)$. Three
     species are present: caffeine, acetaminophen, and warfarin. System parameters listed in
     \cref{tab:point_scatter_params}. }
   \label{fig:sv_vs_na_zf_fig}
 \end{figure}
 \begin{figure}[h]
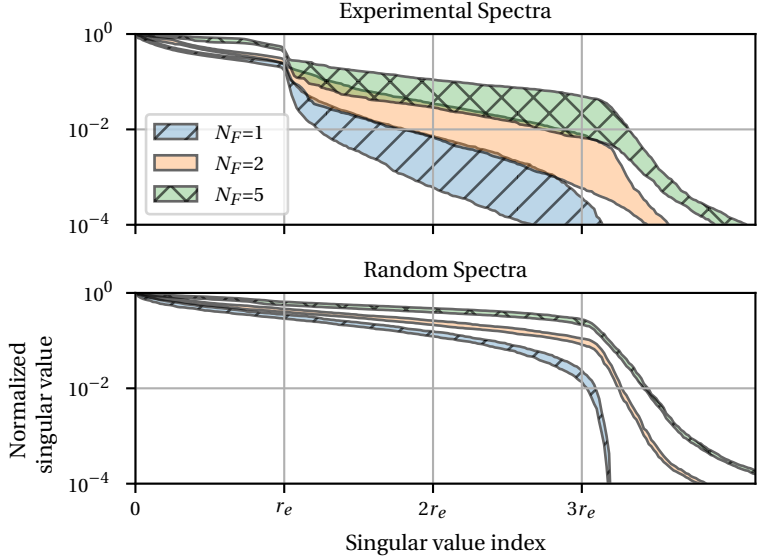

   \centering
   \inputpgf{figs/sv_ensemble}{sv_ensemble_192.pgf} 
   \caption{Singular values of $\Phi^0$ for various combinations of chemical species. The shaded
     area lies in between the best and worst realizations. System parameters listed in
     \cref{tab:point_scatter_params}. Top: singular values using experimentally acquired spectral
     profiles. Bottom: singular values using random Gaussian spectral profiles. }
   \label{fig:sv_ensemble} 
 \end{figure} 

\subsection{Algebraic Conditions for \cref{enum:dictionary}}
We now focus on the case \cref{enum:dictionary}, wherein the target comprises $\Ns$ chemical species
drawn from a ``dictionary'' of $M_s > \Ns$ possible spectra. This problem can be viewed as an instance of
\cref{enum:linear}, in which case \cref{thm:n_species_necc} requires that number of focal planes is
chosen such that $\Nf \Nk \geq M_s r$. This is undesirable if $M_s$ is much larger than $\Ns$.
This approach ignores the constraint that only $\Ns$ chemicals are present in
the sample; by incorporating this side information, we relax the condition on $\Nf$.  This
structure is known as \emph{block sparsity}.
\begin{definition}
  The block vector
  $ \mathbf{\overline{x}} = [\mathbf{x}^{\mathsf{T}}_1,  \mathbf{x}^{\mathsf{T}}_2, \hdots, 
  \mathbf{x}^{\mathsf{T}}_{M_s}]^{\mathsf{T}}$ is said to be \emph{block-K sparse} if the set
  $\Set*{i : \norm{\mathbf{x}_i}_2 > 0}$ has cardinality at most $K$.
\end{definition}
Block sparsity is a natural fit for this problem; we define the $\ns$-th block to be the $\ns$-th
spatial density $\boldhat{p}_{\ns}$, corresponding to the $\ns$-th species in the dictionary. Note
that block sparsity does not require the blocks themselves (\ie, the $
\left\{\boldhat{p}_{\ns}\right\}_{\ns=1}^{\Ns}$) to be sparse.

Conditions for unique recovery of block-sparse vectors have been studied \cite{Eldar2009, Eldar2010,
  Eldar2009c, Elhamifar2012}. Eldar and Mishali \cite{Eldar2009c} developed a straightforward condition for unique
recovery that is useful to the problem at hand:
\begin{lemma}{\cite[Proposition 1]{Eldar2009c}}
  There is a unique block-$\Ns$ sparse solution to $\sbl
  = \Phil \pbl$ if and only if $\Phil \mathbf{v} \neq 0$ for any non-zero $\mathbf{v}$ that is
  block-$2 \Ns$ sparse.
  \label{thm:block_krank}
\end{lemma}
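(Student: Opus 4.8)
The plan is to adapt the classical spark-based characterization of unique sparse recovery to the block-sparse setting; the statement is an equivalence, so I would prove the two directions separately. Throughout, I read the claim uniformly: the stated condition on $\Phil$ holds if and only if, for \emph{every} measurement $\sbl$ that admits a block-$\Ns$ sparse solution, that solution is unique. The entire argument rests on one elementary observation---the difference of two block-$\Ns$ sparse vectors is block-$2\Ns$ sparse, because the set of active blocks of the difference is contained in the union of the two active-block sets, which has cardinality at most $2\Ns$.

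For sufficiency I would assume $\Phil \mathbf{v} \neq 0$ for every non-zero block-$2\Ns$ sparse $\mathbf{v}$ and argue directly. If $\mathbf{x}$ and $\mathbf{y}$ are both block-$\Ns$ sparse with $\Phil \mathbf{x} = \Phil \mathbf{y} = \sbl$, then $\mathbf{x} - \mathbf{y}$ is block-$2\Ns$ sparse and lies in $\nullspace{\Phil}$. The hypothesis forces $\mathbf{x} - \mathbf{y} = 0$, i.e.\ $\mathbf{x} = \mathbf{y}$, so the block-$\Ns$ sparse solution is unique.

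For necessity I would argue by contraposition: assume some non-zero block-$2\Ns$ sparse $\mathbf{v}$ satisfies $\Phil \mathbf{v} = 0$ and construct a measurement with two distinct block-$\Ns$ sparse preimages. Let $S$ be the set of active blocks of $\mathbf{v}$, so $\abs{S} \leq 2\Ns$, and split $S$ into disjoint $S_1, S_2$ with $\abs{S_1}, \abs{S_2} \leq \Ns$. Take $\mathbf{x}$ to be $\mathbf{v}$ restricted to the blocks indexed by $S_1$ (all other blocks set to zero) and $\mathbf{y}$ to be $-\mathbf{v}$ restricted to $S_2$; each is block-$\Ns$ sparse, $\mathbf{x} - \mathbf{y} = \mathbf{v} \neq 0$, and $\Phil \mathbf{x} - \Phil \mathbf{y} = \Phil \mathbf{v} = 0$. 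Hence the common measurement $\Phil \mathbf{x} = \Phil \mathbf{y}$ has two distinct block-$\Ns$ sparse solutions, and uniqueness fails.

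I anticipate no real obstacle, since this is essentially a restatement of \cite[Proposition 1]{Eldar2009c}; the only points warranting care are verifying that any $2\Ns$-element block-support can be partitioned into two pieces of size at most $\Ns$ each---which is immediate---and that zero-padding each restriction preserves block-$\Ns$ sparsity. Accordingly, I would keep the write-up brief and defer to the cited reference for the analogous general statement.
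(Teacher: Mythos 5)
Your proof is correct. Note that the paper itself does not prove this lemma---it is imported verbatim from \cite[Proposition 1]{Eldar2009c} and used as a black box---so there is no in-paper argument to compare against; your write-up supplies exactly the standard spark-style argument that the cited reference uses. Both directions are sound: the sufficiency direction correctly uses that the difference of two block-$\Ns$ sparse vectors is block-$2\Ns$ sparse, and the necessity direction correctly splits the support of a null vector $\mathbf{v} \in \nullspace{\Phil}$ into two pieces of size at most $\Ns$ to manufacture two distinct block-$\Ns$ sparse preimages of the same measurement. Your explicit remark about reading the statement as a universally quantified claim over all measurements admitting a block-$\Ns$ sparse solution is the right disambiguation and matches the intended meaning.
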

We can easily translate \cref{thm:block_krank} into this setting.
\begin{theorem}
  For generic $\mathbf{H}$, within the optical passband there is a unique block-$\Ns$ sparse vector
  $\pbl$ consistent with measurements $\sbl = \Phil \pbl$ if $N_k > r$, $\Nf \geq 2 \Ns$, and
  $\Bbarl$ contains $2 \Ns$ disjoint sets of linearly independent rows, each of cardinality $r =
  \rank{\Bbarl}$.
\end{theorem}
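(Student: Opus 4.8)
The plan is to invoke the block-sparse recovery criterion \cref{thm:block_krank} and reduce the claim to a statement about every collection of $2\Ns$ dictionary species, each of which is governed by the single-shot sufficient condition \cref{thm:n_species_suff}. Throughout I would work within the optical passband, phrasing everything in terms of the restricted matrix $\Phitl = \Hbar \odot \Bbarl$, whose block-columns are indexed by the $M_s$ dictionary elements; by \cref{thm:rank_uniqueness} uniqueness within $\nsetbarperp$ is equivalent to a rank condition on $\Phitl$ rather than on the unrestricted $\Phil$, so I apply \cref{thm:block_krank} with $\Phitl$ in place of $\Phil$ and with each block expressed in its $r$-dimensional passband coordinates. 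Concretely, \cref{thm:block_krank} says that a block-$\Ns$ sparse solution is unique precisely when $\Phitl \mathbf{v} \neq 0$ for every nonzero block-$2\Ns$ sparse $\mathbf{v}$; since such a $\mathbf{v}$ is supported on at most $2\Ns$ blocks, this is equivalent to requiring that, for every index set $\Lambda \subseteq [M_s]$ with $\abs{\Lambda} = 2\Ns$, the submatrix $\Phitl_\Lambda$ formed by keeping the block-columns in $\Lambda$ has full column rank $2\Ns r$.

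The second step is to recognize each $\Phitl_\Lambda$ as the restricted $2\Ns$-species matrix built from the spectral sub-dictionary $\mathbf{H}[\colon, \Lambda]$, namely $\Phitl_\Lambda = (\ones{\Nf}^{\mathsf{T}} \otimes \mathbf{H}[\colon, \Lambda]) \odot \Bbarl$. This is exactly the object analyzed in \cref{thm:n_species_suff}, now with $2\Ns$ playing the role of $\Ns$. I would verify its hypotheses: $\Nf \geq 2\Ns$ holds by assumption; $\Nk > r$ supplies enough wavenumbers; and the $2\Ns$ disjoint sets of $r$ linearly independent rows of $\Bbarl$ guaranteed in the statement are intended to furnish, after grouping by wavenumber across focal planes, the disjoint selections that make the matrices $\mathbf{C}_i$ of \cref{eq:C_suff_mat} full rank (the delicate point addressed below). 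Crucially, this structural hypothesis lives entirely on $\Bbarl$ and is therefore identical for every $\Lambda$. \cref{thm:n_species_suff} then yields, for each fixed $\Lambda$, that $\Phitl_\Lambda$ has full column rank for generic $\mathbf{H}[\colon, \Lambda]$, i.e. outside a Lebesgue-null subset of the entries $\mathbf{H}[\colon, \Lambda]$.

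The final step is to pass from a per-$\Lambda$ generic statement to a single generic $\mathbf{H}$. For each $\Lambda$ the rank-deficiency set is the zero set of a nonzero polynomial in the entries of $\mathbf{H}[\colon, \Lambda]$ (a maximal minor of $\Phitl_\Lambda$), hence a measure-zero, nowhere-dense subset of $\Cbb^{\Nk \times M_s}$ once extended trivially in the remaining coordinates. As there are only $\binom{M_s}{2\Ns}$ choices of $\Lambda$, the union of these bad sets is still measure zero and nowhere dense, so for generic $\mathbf{H}$ every $\Phitl_\Lambda$ attains full column rank simultaneously; \cref{thm:block_krank} then delivers the unique block-$\Ns$ sparse vector within the optical passband.

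I expect the main obstacle to be the translation in the second step: showing that the hypothesis ``$\Bbarl$ contains $2\Ns$ disjoint sets of $r$ linearly independent rows'' really supplies the partitioned structure demanded by \cref{thm:n_species_suff}, where the selected rows must share a common set of wavenumber indices across all focal planes so that the per-wavenumber diagonal spectral scaling can isolate them. If an arbitrary disjoint collection of $r$-row bases need not respect this wavenumber-block structure, I would either strengthen the hypothesis to full Kruskal rank of $\Bbarl$ and apply the preceding corollary (for $2\Ns$ species), or re-run the genericity argument directly on $\Phitl_\Lambda$, using the disjoint bases to exhibit a single spectral dictionary for which the maximal minor is nonzero. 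A secondary point is the cardinality and integrality bookkeeping ($2\Ns r / \Nf \leq \Nk$ and integrality of $r/\Nf$), which the assumptions $\Nf \geq 2\Ns$ and $\Nk > r$ render benign.
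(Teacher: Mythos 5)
Your proposal follows essentially the same route as the paper: reduce via \cref{thm:block_krank} to showing that every $2\Ns$-block submatrix of $\Phitl$ has full column rank, apply \cref{thm:n_species_suff} with $2\Ns$ playing the role of $\Ns$, and dispose of genericity by a finite union of measure-zero exceptional sets over the $\binom{M_s}{2\Ns}$ possible supports (a step the paper's proof leaves implicit, treating only a single support $J$ at a time). The ``delicate point'' you flag---that $2\Ns$ arbitrary disjoint $r$-row bases of $\Bbarl$ need not respect the wavenumber-block structure required by the matrices $\mathbf{C}_i$ of \cref{eq:C_suff_mat}, since the diagonal spectral scaling acts per wavenumber across all focal planes---is a genuine looseness, but it is equally present in the paper's own proof, which simply asserts that ``by assumption, $\Bbarl$ satisfies the conditions of \cref{thm:n_species_suff}.''
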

\begin{proof}
  Let $\mathbf{v}$ be a block-$2 \Ns$ sparse vector. Let $J = [j_1, \hdots, j_{2
    \Ns}]^{\mathsf{T}} \in \Zbb^{2 \Ns}$ index the non-zero blocks of $\mathbf{v}$. The vector
  $\mathbf{v}_{J} \in \Cbb^{2 \Ns N_z}$ contains the non-zero elements of $\mathbf{v}$. The
  matrix $\Phil_J \in \Cbb^{\Nf \Nk \times 2 \Ns N_z}$ is the restriction of $\Phil$ to the $2
  \Ns$ columns indexed by $J$.

  By assumption, $\Bbarl$ satisfies the conditions of \cref{thm:n_species_suff} and $\Phil_J$ is
  generically full column rank. Thus, for generic $\mathbf{H}$, we have $\Phil \mathbf{v} =
  \Phil_J \mathbf{v}_J \neq 0$. Applying \cref{thm:block_krank} completes the proof.
\end{proof}

\subsection{Computational Recovery} 
In the single-species case, the approximate form of the ISAM operator (\cref{sec:asymptotics})
provides a non-iterative reconstruction based on Fourier resampling \cite{Ralston2006a}.
This does not carry over to the multi-species case and we must instead use an iterative approach.

To ease notation, define $\boldhat{p}_s \triangleq \vect{\boldhat{P}_s}$.
We recover the collection of spatial densities by solving the penalized least squares problem
\begin{equation}
  \argmin_{\left\{\boldhat{p}_1, \hdots, \boldhat{p}_{\Ns} \right\}}  \frac{1}{2}
  \sum_{\qp} \norm{\mathbf{\bar{s}}^{\qp} - \Phil \mathbf{\bar{p}}^{\qp}}_2^2 +
  R\left(\boldhat{p}_1, \hdots, \boldhat{p}_{\Ns}\right).
  \label{eq:pwls}
\end{equation}
The first term is known as the \emph{data fidelity} term. It ensures the observed data 
and ``re-imaged'' solution  are consistent. More sophisticated data fidelity terms
can be used to model the effects of shot noise, background signal, and more \cite{Erdogan1999}, but
these are beyond the scope of this work.

The functional $R : \Cbb^{\Ns \times N_x \times N_x \times N_z} \to \Rbb$ regularizes the inverse
problem and encodes any constraints or \emph{a priori} assumptions regarding the spatial densities.
Tikhonov regularization corresponds to
$R\left(\boldhat{p}_1, \hdots, \boldhat{p}_{\Ns}\right) = \sum_{\ns=1}^{\Ns} \norm{\boldhat{p}_s}_2^2$.
Alternatively, solutions that are sparse in a transform domain are
obtained by setting
$R\left(\boldhat{p}_1, \hdots, \boldhat{p}_{\Ns}\right) = \sum_{\ns=1}^{\Ns} \norm{\mathbf{C} \boldhat{p}_s}_1$.
where $\mathbf{C}$ is a sparsifying transform, \eg a wavelet transform. Finally, the mixed
$\ell_1/\ell_2$ norm $\sum_{\ns=1}^{\Ns} \norm{\boldhat{p}_s}_2$ encourages solutions that are
block-sparse; that is, solutions with a minimal number of active species. The non-negative scalar
$\lambda_r$ balances the influence of the data fidelity and regularization terms.

The method used to solve \cref{eq:pwls} depends on the chosen regularizer. In the case of Tikhonov
regularization, \cref{eq:pwls} reduces to the solution of the linear system
\begin{equation}
  \left(\left( \Phil \right)^{\mathsf{H}} \Phil + \lambda_r \mathbf{I}\right) \mathbf{\bar{p}}^{\qp}  = \left(\Phil \right)^{\mathsf{H}} \mathbf{\bar{s}}^{\qp}
\end{equation}
for each $\qp \in \left[ N_x \right]^2$.
The conjugate gradient algorithm works well in practice and requires only matrix-vector products
with $\Phil$ and $\left( \Phil \right)^{\mathsf{H}}$.  These matrices are not explicitly formed;
only the coefficients $\Ahl_{\nf}[\qp, m, n]$ in \cref{eq:isam_matrix_coeff} are precomputed and stored.  Similarly, the
matrices $\mathbf{D}_s$ are not formed;  only the spectral profiles are stored, and products with
$\mathbf{D}_s$ are computed by elementwise multiplication.  The vector $\mathbf{\bar{y}}^\qp = \Phil
\mathbf{\bar{p}}^\qp$ consists of blocks $\boldhat{y}^{\qp}_{\nf} \in \Cbb^{\Nk}$ for $\nf \in
[\Nf]$, with
\begin{equation}
  \mathbf{\hat{y}}^\qp_{\nf} = \sum_{\ns=1}^{\Ns} \mathbf{D}_{\ns} \Ahl_{\nf} \mathbf{\hat{p}}^\qp_{\ns}.
\end{equation}
Similarly, $\mathbf{\bar{w}}^{\qp} = \left(  \Phil\right)^{\mathsf{H}}
\bar{\mathbf{y}}^{\qp}$ consists of blocks $\mathbf{\hat{w}}_{\ns}^\qp$ with $\ns \in \Ns$, where the block is
computed as
\begin{equation}
\mathbf{\hat{w}}^\qp_{\ns} = \sum_{\nf=1}^{\Nf} (\Ahl_{\nf})^{\mathsf{H}}\mathbf{D}^{\mathsf{H}}_{\ns} \mathbf{\hat{y}}^\qp_{\nf}.
\end{equation}
Many sparsity-promoting regularizers are non-differentiable. In this case, proximal methods such as
FISTA \cite{Beck2009} or the Alternating Direction Method of Multipliers (ADMM) \cite{Eckstein1992,
  Boyd2010, Afonso2011} are attractive. This class of algorithms decomposes the problem
\cref{eq:pwls} into a sequence of simpler subproblems. The solution of a linear system 
is often a key ingredient of such algorithms.

\section{Simulations}
\label{sec:nspecies_sims}
We now describe two simulations used to validate the proposed approach. For simplicity, we consider
only two spatial dimensions: one transverse ($x$) and one axial ($z$).

Preliminary work on the \nsp model suffers from three unrealistic assumptions \cite{Deutsch2015}.
The simulations used unrealistic wavelength ranges, leading to nearly complete coverage of Fourier
space. This removes the large null space present in $\mathbf{A}_{\nf}$ and simplifies the
reconstruction problem. Secondly, the phantoms used satisfied the \nsp model exactly; no spectral
noise was considered. Finally, the synthetic data used in the simulations were generated using
the asymptotic approximation to the ISAM operator, and thus
under the first Born approximation. This neglects multiple scattering, absorption, and the
discrepancy between the exact and approximate ISAM models. As a consequence, the simulations present
an overly optimistic view of the proposed imaging modality.

We generate synthetic data using accurate physical models and system parameters. These data
include multiple scattering and absorption effects---only the inversion is performed under the Born
approximation. Further, the simulated targets do not precisely follow the \nsp model; instead, there
are position-dependent spectral variations within each species. In particular, we simulate an object
of the form $\eta(\rr, \ko) = \sum_{\ns=1}^{\Ns} p_{\ns}(\rr) h_{\ns}(\rr, \ko)$, where $h_{\ns}(\rr, \ko) =
h_{\ns}(\ko) + e_{\ns}(\rr, \ko)$ and $e_{\ns}(\rr, \ko) \sim \mathcal{CN}(0, \xi_{\ns})$ is a circular complex
Gaussian random variable \cite{Goodman2015-Statis-Optic}.

The minimization problem \cref{eq:pwls} is solved on an NVidia Titan X GPU using a combination of
Python and CUDA \cite{scikitscuda, Kloeckner2012-Pycud-Pyopen}.

\subsection{Point Targets}
\label{sec:point_targets}
\begin{table}
  \centering
  \begin{tabular}{|c  c || c  c || c  c |}
    \hline
   $N_x$ & $192$ & $L_x$ & $423.6 \si{\micro\meter}$   & $\Delta_x$ & $2.2 \si{\micro\meter}$ \\ \hline
    $N_z$ & $384$ & $L_z$ & $282.4 \si{\micro\meter}$   & $\Delta_z$ & $0.7 \si{\micro\meter}$ \\ \hline
    $\Nk$ & $384$ & $\kmin$ & $0.4 \ \si{\radian \cdot \micro\meter}^{-1}$ & $\kmax$ & $1.1\ \si{\radian \cdot \micro\meter}^{-1}$\\ \hline 
    $r_e$ & $60$& $\lmin$ & $5.9 \si{\micro\meter}^{-1}$ & $\lmax$    & $15.4 \si{\micro\meter}^{-1}$ \\ \hline 
     $\Nf$ & $3$   & $\zf$ & $[70, 140, 211] \si{\micro\meter}$ & $\na$   & $0.4$ \\ \hline
  \end{tabular}
  \caption{Parameters for point target simulations.}
  \label{tab:point_scatter_params}
\end{table}
We formed a spectral library of five chemicals using refractive index data provided through the
IARPA SILMARILS project. The corresponding spectral profiles
are plotted in \cref{fig:points_spectra}. The target consisted of $50$ point scatterers. Each point
scatterer is associated to one chemical species; only three species (out of the five possible) are
present. We do not know \emph{a priori} which chemicals are present.

We generated measurements using the Foldy-Lax model, which includes multiple scattering effects
\cite{Devaney2012}. Data were generated at three focal planes in a $420 \times 280 \ \si{\micro\meter}$
volume according to the parameters in \cref{tab:point_scatter_params}. The source power spectrum was
flat over $[\kmin, \kmax]$.  This combination of parameters---three active species, three focal
planes, and a library of five possible species---corresponds to the case of
\cref{{enum:dictionary}}.

To assess the deviation from the single scattering model, we generated two sets of measurements
using the same target. The first set of measurements, denoted $\mathbf{s}$, uses the Foldy-Lax
method and incorporates multiple scattering. The second, $\mathbf{s}_B$, is generated using the Born
approximation and thus includes only single scattering events. The ratio $\norm{\mathbf{s} -
  \mathbf{s}_B}_2 / \norm{\mathbf{s}_B}_2$ indicates that more than $20\%$ of the energy in
$\mathbf{s}$ comes from multiple scattering events.

We performed two sets of simulations: the first using Tikhonov regularization and the second using
sparsity-promoting regularization. In the latter case, motivated by the spatial-domain sparsity of
the target, we set $R(\mathbf{P}) = \sum_{\ns=1}^{5} \norm{\mathbf{p}_{\ns}}_1$. In the Tikhonov case,
we performed $300$ iterations of conjugate gradient on the normal equations with $\lambda_r=10^{-5}$.
In the case of $\ell_1$ regularization, we used $2000$ iterations of the FISTA algorithm with
$\lambda_r =10^{-3}$. Both cases terminated in under one minute.

The magnitude of the reconstructed spatial densities are shown in \cref{fig:points_recon_tiled}.
Recall that the surface of observable Fourier components is restricted to $\kzo < 0$. As such, any
linear reconstruction method (\eg, Tikhonov-regularized least squares) will produce a complex-valued
image; we display only the magnitude and squared magnitude of the recovered signal. For
visualization purposes we have projected the point-target phantom onto the optical passband. In both
cases, the reconstructed targets are correctly spatially localized and identified with the correct
species.

The Tikhonov regularized reconstruction consists of the point scatterers sitting on top of a
``noisy'' background. The background is primarily due to multiple scattering effects and spectral
variations which are not captured by the forward model. This background term is distributed across
all five possible species; however, the recovered point scatterers are associated to the correct
species.  The background is eliminated when viewing the squared modulus of the
reconstruction.  

The $\ell_1$ regularized reconstruction suppresses the background term. There is nearly perfect
agreement between the true target and the reconstructed target, despite taking data at only three,
rather than five, focal planes. The sparsity of the target, coupled with the $\ell_1$ regularization,
successfully eliminates artifacts due to multiple scattering.

For visualization purposes we map the three active species to the red, green, and blue channels of
an RGB image. The filtered phantom, Tikhonov, and filtered $\ell_1$ reconstructions
are shown in \cref{fig:points_rgb}. 

\begin{figure}
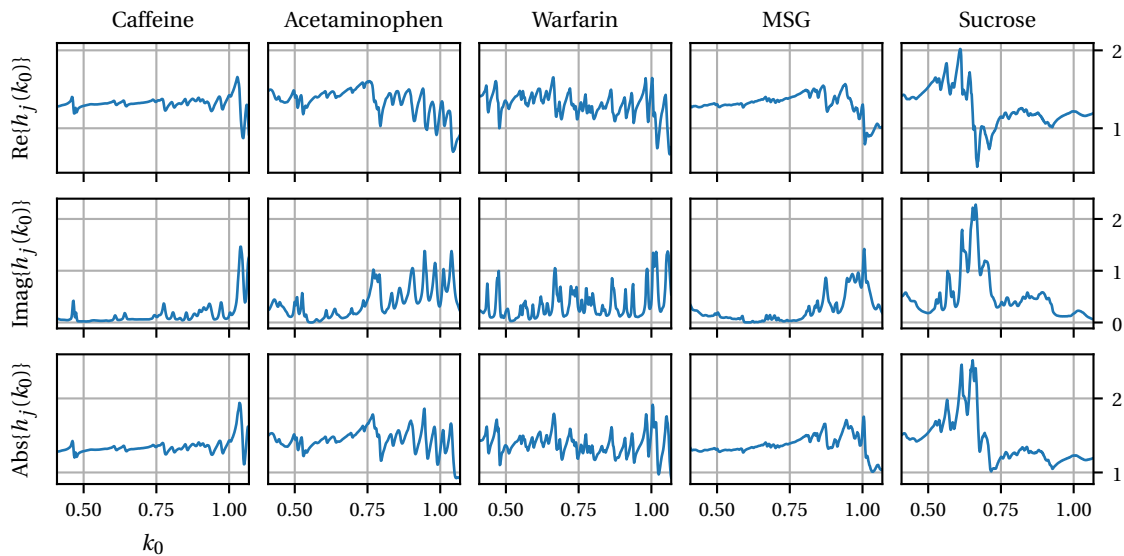

  \centering
  \inputpgf{figs/points}{point_spectra.pgf}
  \caption{Spectral profiles for the five chemicals used in point scattering simulations.}
  \label{fig:points_spectra}
\end{figure}

\begin{figure}
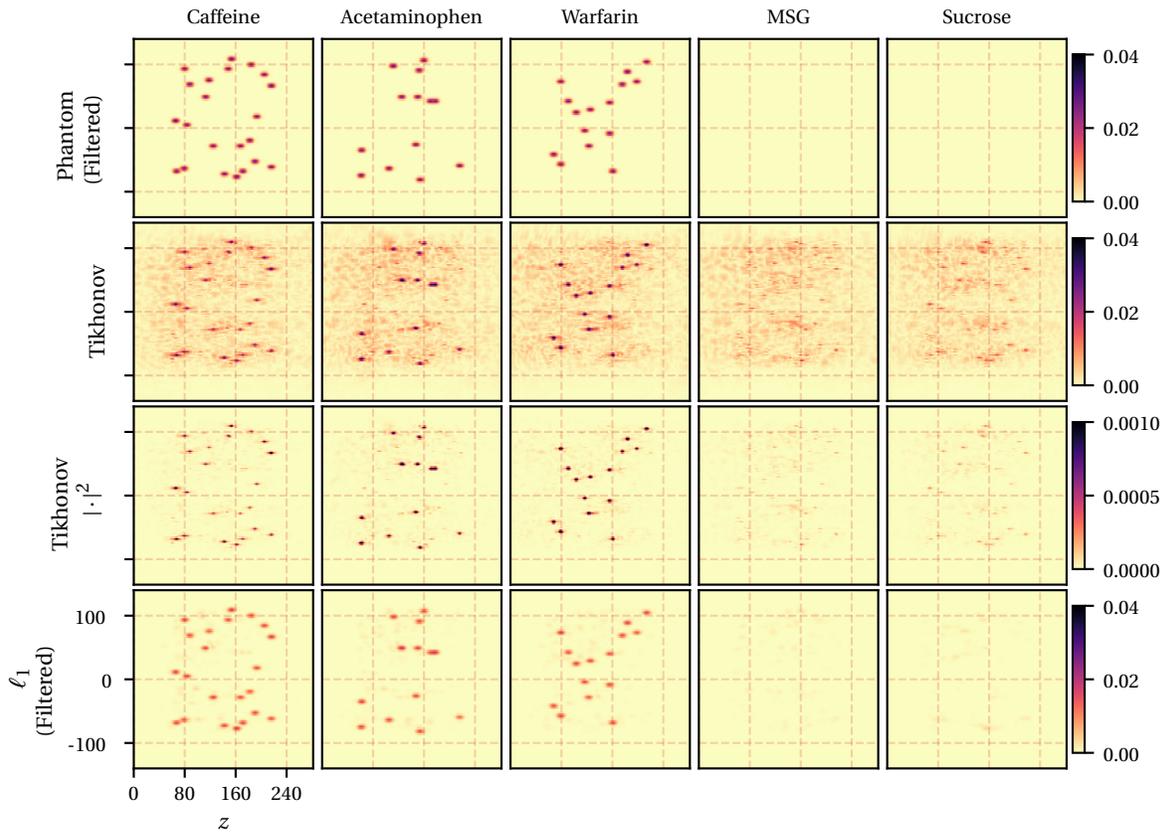

  \centering
  \inputpgf{figs/points}{points_broken_out2.pgf}
  \caption{Reconstructions of point scatterers described in \cref{sec:point_targets}.  }
  \label{fig:points_recon_tiled}
\end{figure}

\begin{figure}
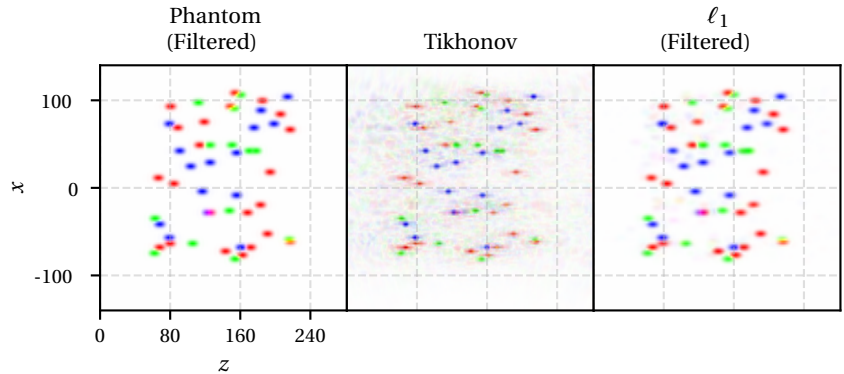

  \centering
  \inputpgf{figs/points}{point_recon_rgb.pgf}
  \caption{ Visualizing the reconstructed point targets by assigning the three active species to a
    RGB channel. Red: caffeine, Green: acetaminophen, Blue: warfarin. The two species that are not
    present (MSG, Sucrose) were ignored. }
  \label{fig:points_rgb}
\end{figure}

\subsection{Cell Phantom}
\label{sec:cell_phantom}
\begin{table}[ht]
  \centering
  \begin{tabular}{|c  c || c  c || c  c |}
    \hline
$N_x$ & $256$ & $L_x$ & $150.0 \si{\micro\meter}$   & $\Delta_x$ & $0.6 \si{\micro\meter}$ \\ \hline
    $N_z$ & $256$ & $L_z$ & $150.0 \si{\micro\meter}$   & $\Delta_z$ & $0.6 \si{\micro\meter}$ \\ \hline
    $\Nk$ & $256$ & $\kmin$ & $0.7 \ \si{\radian \cdot \micro\meter}^{-1}$ & $\kmax$ & $2.1\ \si{\radian \cdot \micro\meter}^{-1}$\\ \hline 
    $r_e$ & $67$& $\lmin$ & $3.0 \si{\micro\meter}^{-1}$ & $\lmax$    & $9.0 \si{\micro\meter}^{-1}$ \\ \hline 
     $\Nf$ & $3$   & $\zf$ & $[54, 75, 96] \si{\micro\meter}$ & $\na$   & $0.5$ \\ \hline
  \end{tabular}
  \caption{Parameters for cell phantom simulation.}
  \label{tab:cell_params}
\end{table}
Next, we evaluated the ability to image extended targets. The target is the cellular phantom shown
in \cref{fig:annotated_phantom}, which comprises three chemical species. The spectral library
contains five total species.

We generated synthetic measurements by solving the Lipmann-Schwinger equation (see,
\eg,\cite{Devaney2012}) using the using the Multi-Level Fast Multipole Algorithm (MLFMA)
\cite{Coifman1993-Fast-Multip}.  The data are not generated under the Born approximation, and thus
includes multiple scattering and absorption phenomenon not captured using the forward model.
We use a version of the MLFMA specialized for simulating two spatial
dimensions \cite{Meng2018-Wideb-Fast, Hidayetoglu2018-Fast-Massiv}.

We generated measurements for only three focal planes; the relevant computational parameters are
listed in \cref{tab:cell_params}.
We generated synthetic spectral profiles using a sum-of-Lorentzians model
\cite{Fowles1989-Introd}.  Each spectral profile is of the form
 \begin{equation}
   \label{eq:lorentzian_species}
   h(\ko) \triangleq \sigma_{0} +  \sum_{n=1}^{99}  \frac{\sigma_{n}}{\nu^2_{n} - \ko^2 - \mathrm{i} \gamma_{n} \ko},
 \end{equation}
 with $\sigma \sim \mathrm{Unif}[0, 0.1]$, $\nu \sim \mathrm{Unif}[1.2 \pi, 4.4 \pi]$,
 and 
 $\gamma \sim \mathrm{Unif}[\num{2 \pi e-3}, \num{4\pi e-2}]$, where $\mathrm{Unif}[a, b]$ is
 the uniform distribution over the interval $[a, b]$. The spectral profiles are plotted in
 \cref{fig:cell_spectra}.

The first-order Born approximation is valid only if the total phase change between the incident field
and the field inside the sample is less than $\pi$---this implies that the object should be either
weakly scattering or small in spatial extent \cite{Chen1998,Slaney1984}. The proposed phantom is
neither. To investigate the effect on scattering strength on the reconstructed images, we generated
synthetic measurements for the scaled object $\delta \eta(\ro, \ko)$ where $0 < \delta \leq 1$. By
reducing $\delta$, we reduce the scattering strength and eventually fall into a regime where the
first-order Born approximation holds.

We used Tikhonov regularization with $\lambda_r = \num{1e-4}$ and $500$ iterations of the
conjugate-gradient algorithm. The resulting reconstructions are shown in
\cref{fig:cell_by_species_noisy}. The top row illustrates the projection of the phantom onto the
optical passband; this serves as the ``gold standard'' for Tikhonov-regularized
reconstructions. The remaining rows are the reconstructed images. As expected, only the edges of the
phantom that are nearly perpendicular to the optical axis are visible. The reconstructed images
deteriorate as $\delta$ increases, particularly at the rear edge of each feature. However, the
correct species is identified in each case; negligible energy is deposited into Species 4 and 5.

\cref{fig:cell_tik} illustrates the influence of the regularization parameter $\lambda_r$. Noise
dominates the reconstruction when $\lambda_r$ is too small. When $\lambda_r$ is too large, there is
no chemical identification- the recovered spatial densities are nearly identical for each species.

\begin{figure}
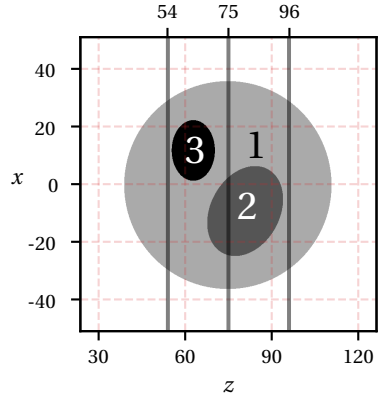
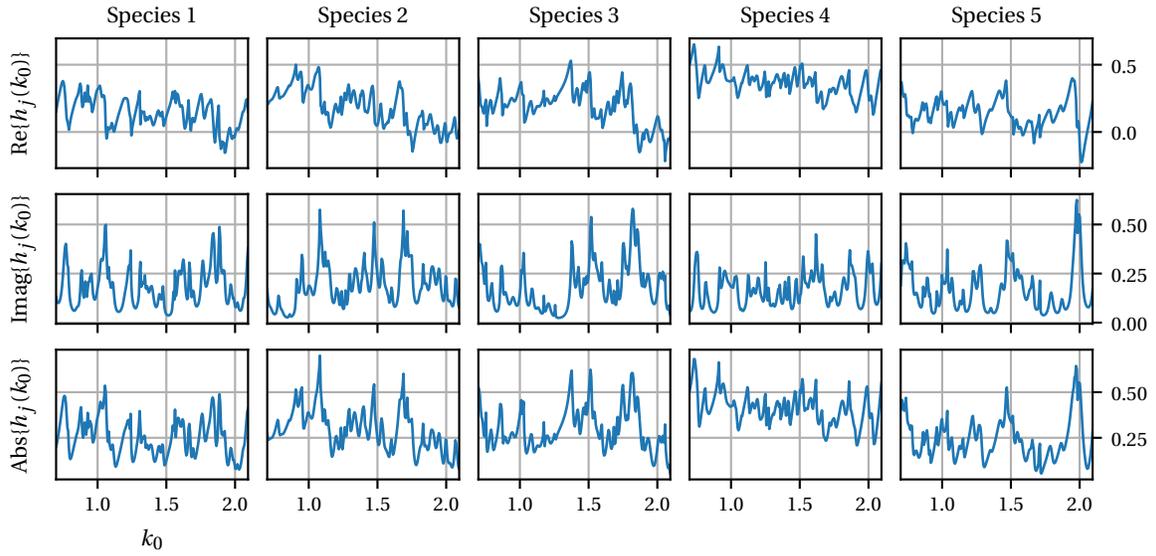

  \centering
\begin{subfigure}[t]{2in} 
  \inputpgf{figs/cells}{cell_gray.pgf}
  \caption{}
  \label{fig:annotated_phantom} 
\end{subfigure}
\\
\begin{subfigure}[t]{\textwidth} 
  \centering
  \inputpgf{figs/cells}{cell_spectra_three_rows.pgf}
  \caption{}
  \label{fig:cell_spectra}
\end{subfigure}
\caption{(a)  Three-species cell phantom. The detector plane is plane located at $z=0$. Solid
  vertical lines denote the three focal planes. All units are $\si{\micro\meter}$.
(b)Spectral profiles for cell phantom, plotted for $\delta = 1$.}
\end{figure}

\begin{figure}
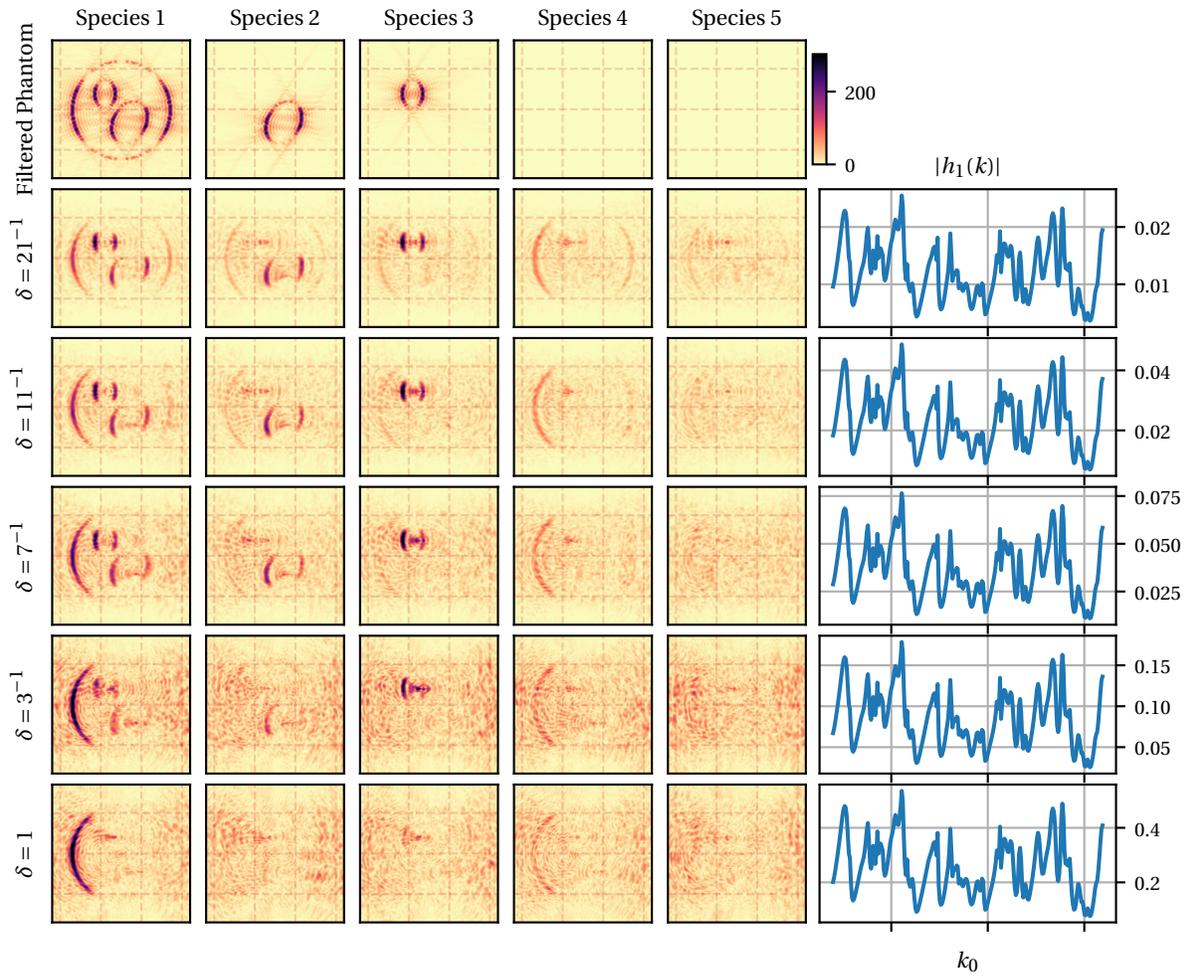

  \centering
  \inputpgf{figs/cells}{cells_by_species_noisy.pgf}
  \caption{
    Reconstructions of cell phantom as a function of scattering strength.  All reconstructions use
    Tikhonov regularization with $\lambda_r = 10^{-5}$.   
  }
  \label{fig:cell_by_species_noisy}
\end{figure}

\begin{figure}
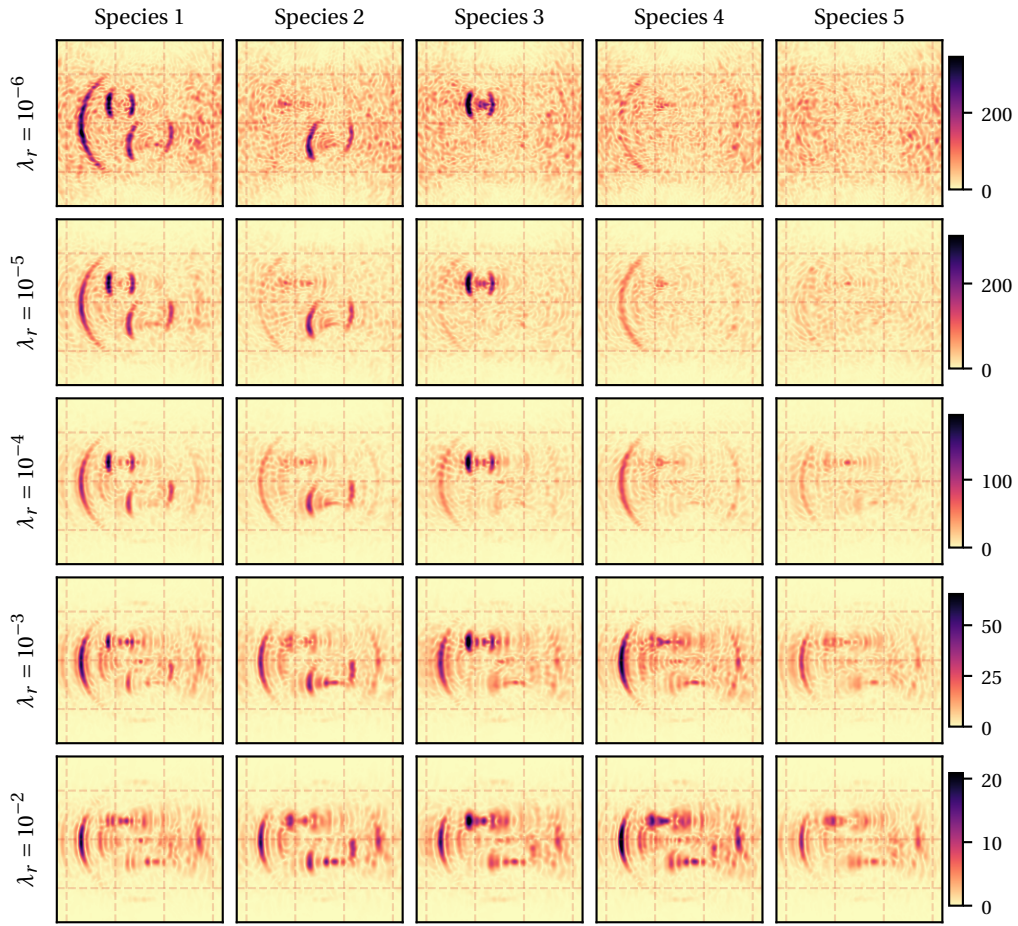

  \centering
  \inputpgf{figs/cells}{cells_tik.pgf}
  \caption{Reconstructions of cellular phantom using Tikhonov regularization and various levels of
    $\lambda_r$. The scattering strength parameter is $\delta=11^{-1}$.
    As $\lambda_r$ increases, the reconstruction fails to distinguish between chemical species.
}
  \label{fig:cell_tik}
\end{figure}

\section{Conclusions}
We have considered the problem of chemically specific and spatially resolved tomographic imaging
from interferometric measurements. We require the target to be the linear combination of a finite
number of distinct chemical species given data at a small number of \emph{en-face} focal planes.
We developed necessary and sufficient conditions for unique recovery of a target satisfying this
model. Linear independence of the chemical spectra is not sufficient---additional spectral
diversity is required.

In this paper, we assume the chemical spectra were either known or drawn from a library of possible
spectra. In the latter case, the number of required focal planes scales with the number of chemicals
present in the sample, not the total number in the library. Future work will consider extension
fully blind problem.

Our approach requires interferometric (phase-resolved) measurements and solves the linearized
scattering problem. This extension to intensity-only measurements and the removal of the Born
approximation are two avenues for future work.

Phaseless, intensity-only diffraction tomography has been demonstrated by modifying the acquisition
scheme \cite{Anastasio2006, Anastasio2009, Horstmeyer2016-Diffr-Tomog} and by optimization-based
approaches \cite{Pham2018-Versat-Recon}. Advances in high performance computing
\cite{Meng2018-Wideb-Fast, Hidayetoglu2018-Super-Full, Hidayetoglu2018-Fast-Massiv} and deep
learning \cite{Liu2017a, Sun2018-Effic-Accur, Soubies2017} have facilitated the solution of large
scale inverse scattering problems without linearization. In some cases, solving the nonlinear
inverse scattering problem overcomes the ``missing cone'' effect that hampers reconstruction of
extended targets. However, thus far, these approaches have only considered non-dispersive objects.
Extension of these methods to spectroscopic tomography within the \nsp approximation is an exciting
area of future work.

\section*{Acknowledgments}
The work of L. Pfister was partially supported by the Andrew T. Yang fellowship.
This research is based upon work partially supported by the Office of the Director of National
Intelligence (ODNI), Intelligence Advanced Research Projects Activity (IARPA), via Air Force
Research Laboratory (AFRL) FA8650-16-C- 9108. The views and conclusions contained herein are those
of the authors and should not be interpreted as necessarily representing the official policies or
endorsements, either expressed or implied, of the ODNI, IARPA, AFRL or the U.S. Government. The U.S.
Government is authorized to reproduce and distribute reprints for Governmental purposes
notwithstanding any copyright annotation thereon.

\appendix
\section{Proof of Main Theorems}
\label{sec:proofs}
\begin{proof}[Proof of \cref{thm:rank_uniqueness}]
  \cref{cond:unique} $\implies$ \cref{cond:nullspace}: Let $\pbl \in \nsetbarperp$ be the
  unique solution to $\sbl = \Phil \pbl$. Let $\mathbf{x} \in
  \nullspace{\Phil} \cap \nsetbarperp$. Now $\Phil(\pbl + \mathbf{x}) = \Phil
  \pbl = \sbl$. As $\mathbf{x} + \pbl \in \nsetbarperp$, by
  \cref{cond:unique} $\mathbf{x} = 0$. Thus \cref{cond:unique} $\implies$ \cref{cond:nullspace}.

  \cref{cond:nullspace} $\implies$ \cref{cond:rank}: Recall $\Phitl = \Phil (\mathbf{I}_{\Nf}
  \otimes \Vl) \in \Cbb^{\Nf \Nk \times \Ns r}$. As $\mathbf{I}_{\Nf} \otimes \Vl$ is
  a basis for $\nsetbarperp$, and $\nullspace{\Phil} = \nsetbar$ by assumption, $\Phitl
  \mathbf{x} = \mathbf{0}$ if and only if $\mathbf{x} = \mathbf{0}$; thus $\nullspace{\Phitl}
  = \left\{ \mathbf{0} \right\}$. By the rank nullity theorem, $\rank{\Phitl} = \Ns r$.

  \cref{cond:rank} $\implies$ \cref{cond:unique}: Suppose $\exists \mathbf{u}, \mathbf{v} \in
  \nsetbarperp$ such that $\Phil \mathbf{u} = \Phil \mathbf{v}$.
  As $\mathbf{I}_{\Nf} \otimes \Vl$ is a basis for $\nsetbarperp$, there are unique vectors 
  $\mathbf{x}, \mathbf{y}$ such that $\mathbf{u} = (\mathbf{I}_{\Nf} \otimes \Vl) \mathbf{x}$
  and $\mathbf{v} = (\mathbf{I}_{\Nf} \otimes \Vl) \mathbf{y}$.  
  Now $\mathbf{0} = \Phil (\mathbf{u} - \mathbf{v}) = \Phitl(\mathbf{x} - \mathbf{y}) \implies
  \mathbf{x} = \mathbf{y}$ as $\Phitl$ is full column rank; thus $\mathbf{u} = \mathbf{v}$,
  completing the proof.
\end{proof}
%%%%%%%%%%%%%%%%%%%%%%%%%%%%%%%%%%%%%%%%%%%%%%%%%%%%%%%%%%%%%%%%%%%%%%%%%%%%%%%
The following lemma regarding the rank of the Khatri-Rao product will prove useful:
\begin{lemma}
  \label{thm:kr_rank}
  Given $\mathbf{A} \in \Cbb^{m \times n_1}$ and $\mathbf{B} \in \Cbb^{m \times n_2}$,
  $\rank{\mathbf{A} \odot \mathbf{B}} \leq \min{(m, \rank{\mathbf{A}}\rank{\mathbf{B}})}$.
\end{lemma}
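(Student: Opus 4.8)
The plan is to establish the two bounds separately and then take their minimum. The bound $\rank{\mathbf{A} \odot \mathbf{B}} \leq m$ is immediate, since by construction $\mathbf{A} \odot \mathbf{B}$ has exactly $m$ rows. The substance lies in proving $\rank{\mathbf{A} \odot \mathbf{B}} \leq \rank{\mathbf{A}}\rank{\mathbf{B}}$, which I would obtain by a tensor-product containment argument on the row space.

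First I would identify $\Cbb^{n_1 n_2}$ with $\Cbb^{n_1} \otimes \Cbb^{n_2}$ via the Kronecker product, so that for row vectors $\mathbf{a} \in \Cbb^{n_1}$ and $\mathbf{b} \in \Cbb^{n_2}$ the product $\mathbf{a} \otimes \mathbf{b}$ is the corresponding elementary tensor. The key observation is that the $i$-th row of $\mathbf{A} \odot \mathbf{B}$ is $\mathbf{A}[i, \colon] \otimes \mathbf{B}[i, \colon]$, where $\mathbf{A}[i, \colon] \in \rowspace{\mathbf{A}}$ and $\mathbf{B}[i, \colon] \in \rowspace{\mathbf{B}}$. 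Hence each row of $\mathbf{A} \odot \mathbf{B}$ lies in the subspace $\rowspace{\mathbf{A}} \otimes \rowspace{\mathbf{B}} \subseteq \Cbb^{n_1} \otimes \Cbb^{n_2}$, and therefore so does its entire row space: $\rowspace{\mathbf{A} \odot \mathbf{B}} \subseteq \rowspace{\mathbf{A}} \otimes \rowspace{\mathbf{B}}$.

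Next I would invoke the fact that the dimension of a tensor product of subspaces is the product of the dimensions, so $\dimm{\rowspace{\mathbf{A}} \otimes \rowspace{\mathbf{B}}} = \rank{\mathbf{A}}\rank{\mathbf{B}}$. The containment above then gives $\rank{\mathbf{A} \odot \mathbf{B}} = \dimm{\rowspace{\mathbf{A} \odot \mathbf{B}}} \leq \rank{\mathbf{A}}\rank{\mathbf{B}}$. Combining this with the trivial bound $\rank{\mathbf{A} \odot \mathbf{B}} \leq m$ yields $\rank{\mathbf{A} \odot \mathbf{B}} \leq \min(m, \rank{\mathbf{A}}\rank{\mathbf{B}})$, as claimed.

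The argument is essentially routine linear algebra, and I do not anticipate a genuine obstacle. The only step warranting a moment of care is verifying that the Kronecker product respects the subspace structure, i.e., that an elementary tensor formed from vectors drawn from two subspaces lands in the tensor product of those subspaces; this follows directly by expanding each row in a basis for the corresponding row space and using bilinearity of the Kronecker product. An alternative, more computational route would bound the column rank instead by writing $\mathbf{A} = \sum_{j} \mathbf{u}_j \mathbf{a}_j^{\mathsf{T}}$ and $\mathbf{B} = \sum_{l} \mathbf{v}_l \mathbf{b}_l^{\mathsf{T}}$ as sums of $\rank{\mathbf{A}}$ and $\rank{\mathbf{B}}$ rank-one terms and observing that $\mathbf{A} \odot \mathbf{B}$ is then a linear combination of the $\rank{\mathbf{A}}\rank{\mathbf{B}}$ columns $(\mathbf{u}_j \odot \mathbf{v}_l)$, but the row-space argument is cleaner and I would present that one.
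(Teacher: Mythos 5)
Your proof is correct and takes essentially the same route as the paper's: both bounds rest on the observation that each row $\mathbf{A}[i,\colon]\otimes\mathbf{B}[i,\colon]$ of $\mathbf{A}\odot\mathbf{B}$ lives inside the Kronecker-product structure. The paper phrases this as $\mathbf{A}\odot\mathbf{B}$ being a row-submatrix of $\mathbf{A}\otimes\mathbf{B}$ and cites $\rank{\mathbf{A}\otimes\mathbf{B}}=\rank{\mathbf{A}}\rank{\mathbf{B}}$, while you prove the equivalent containment $\rowspace{\mathbf{A}\odot\mathbf{B}}\subseteq\rowspace{\mathbf{A}}\otimes\rowspace{\mathbf{B}}$ directly; the two justifications are interchangeable.
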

\begin{proof}
  As $\mathbf{A}\odot \mathbf{B} \in \Cbb^{m \times n_1 n_2}$, we have $\rank{\mathbf{A}\odot
    \mathbf{B}} \leq \min{(m, n_1 n_2)}$. Note that $\mathbf{A} \odot \mathbf{B}$ contains a subset
  of rows of the matrix $\mathbf{A} \otimes \mathbf{B}$. As the rank of the Kronkecker product is
  equal to the product of the ranks of $\mathbf{A}$ and $\mathbf{B}$ (\eg, \cite{Golub2012}), we have
  $\rank{\mathbf{A} \odot \mathbf{B}} \leq \rank{\mathbf{A} \otimes \mathbf{B}} = \rank{\mathbf{A}}
  \rank{\mathbf{B}}$.
\end{proof}
%%%%%%%%%%%%%%%%%%%%%%%%%%%%%%%%%%%%%%%%%%%%%%%%%%%%%%%%%%%%%%%%%%%%%%%%%%%%%%% 

\begin{proof}[Proof of \cref{thm:n_species_necc}]
  Here, we suppress the superscript $\qp$.
  By \cref{thm:rank_uniqueness}, it suffices to show that the proposed conditions are necessary for
  $\tilde{\Phi}$ to have rank $\Ns r$.
  \cref{nec:dimensions} follows as $\tilde{\Phi}$ can have rank $\Ns r $
  only if $\Nk \Nf \geq \Ns r$.

  We show \cref{nec:independent} by
  contradiction; suppose $\rank{\mathbf{H}} = q < \Ns$. By construction
  $\rank{\Hbar} = \rank{\mathbf{H}}$. Thus by \cref{thm:kr_rank},
  $\rank{\tilde{\Phi}} \leq \rank{\Hbar}\rank{\Abarl} \leq r q < \Ns r$.

  For \cref{nec:orthogonality}, suppose the first row of $\B$ is orthogonal to the remaining
  $\Nk\Nf$ rows. Let $\mathbf{x}$ be a column vector formed from first row of $\B$
  and let $\mathbf{e}_1 \triangleq [1, 0, \hdots, 0] \in \Cbb^{\Nk\Nf}$; by construction,
  $\B \mathbf{x} = \mathbf{e}_1$.
  Set $\alpha = \sum_{\ns=2}^{\Ns} \mathbf{h}_{\ns}[1] / \mathbf{h}_1[1]$; then
  \begin{equation}
    \tilde{\Phi} \
    [ -\alpha \mathbf{x}^{\mathsf{T}}, \mathbf{x}^{\mathsf{T}}, \hdots, \mathbf{x}^{\mathsf{T}}]^{\mathsf{T}} =
    \mathrm{diag}\left\{\sum_{\ns=2}^{\Ns} \mathbf{h}_{\ns} - \alpha \mathbf{h}_1\right\} \mathbf{e}_1 = \mathbf{0},
  \end{equation}
  and so $\rank{\Phi} \leq \Ns r -1$.

  To show \cref{nec:spec_diversity},
  suppose there is a subset $J$ with $\abs{J} \geq \Ns$ such that
  $\mathbf{H}[J, \colon] \in
  \Cbb^{\abs{J} \times \Ns}$ is rank $\Ns$ and the remaining rows, $\mathbf{H}[J^c, \colon] \in \Cbb^{\Nk -
    \abs{J} \times \Ns}$ has rank $q < \Ns$.
  Define $\tilde{\Phi}^J \in \Cbb^{\Nf \abs{J} \times \Ns r}$
  to be the rows of $\tilde{\Phi}$ involving the rows of $\mathbf{H}$ indexed by $J$; that is,
  \begin{equation}
    \tilde{\Phi}^J =
    \begin{bmatrix}
      \mathbf{H}[J, \colon] \odot \B_1[J, \colon] \\
      \vdots \\
      \mathbf{H}[J, \colon] \odot \B_{\Nf}[J, \colon]
    \end{bmatrix},
  \end{equation}
  and construct $\tilde{\Phi}^{J^c} \in \Cbb^{\Nf (\Nk - \abs{J}) \times \Ns r}$ using the rows indexed
  by $J^c$.  
  As both $\B[J, \colon] \in \Cbb^{\Nf \abs{J} \times r}$ and $\B[J^c, \colon] \in \Cbb^{\Nf (\Nk -
    \abs{J}) \times r}$ have rank at most $r$, by
  \cref{thm:kr_rank}, we have
  \begin{equation}
    \rank{\tilde{\Phi}} \leq 
    \rank{\tilde{\Phi}^J} + \rank{\tilde{\Phi}^{J^c}} 
    \leq
  \min{\left(\Nf \abs{J}, \Ns r\right)} + 
  \min{\left(\Nf (\Nk - \abs{J}) , q r\right)} \triangleq \beta.
  \label{eq:gamma_ub}
  \end{equation}
  We wish to establish conditions such that $\beta \geq \Ns r$. This is clearly true, regardless
  of $q$, when $\Nf \abs{J} \geq \Ns r$. When $\abs{J} < \Ns r / \Nf$, we have
  \begin{equation}
    \beta = \Nf \abs{J} + \min{\left(\Nf (\Nk - \abs{J}) , q r\right)} .
  \end{equation}
  Suppose $\Nf(\Nk - \abs{J}) < q r$; then $\beta = \Nf \Nk \geq \Ns r$
  where the inequality follows from condition \cref{nec:dimensions}. Otherwise, if $\Nf (\Nk - \abs{J}) \geq q r$,
  then $\beta = \Nf \abs{J} + q r$ and $q \geq \Ns - \Nf \abs{J} / r$ implies $\beta \geq \Ns r$.

  To show \cref{nec:meas_diversity}, for each $i \in [\Nk]$ we define the index set 
  $J_i = \left\{ i, i + \Nk, \hdots, i + (\Nf-1) \Nk \right\}$; now, 
  \begin{equation}
    \tilde{\Phi}^{J_i} = (\ones{\Nf}^{\mathsf{T}} \otimes \mathbf{\Hbar}[J_i, \colon]) \odot \Bbar[J_i, \colon] = 
    \begin{bmatrix}
      \mathbf{h}_1[i] \B_1[i, \colon] &
      \mathbf{h}_2[i] \B_1[i, \colon] &
      \hdots & \mathbf{h}_{\Ns}[i] \B_1[i, \colon]   \\
      \vdots   & \vdots & & \vdots \\ 
      \mathbf{h}_1[i] \B_{\Nf}[i, \colon] &
      \mathbf{h}_2[i] \B_{\Nf}[i, \colon] &
      \hdots &  \mathbf{h}_{\Ns}[i] \B_{\Nf}[i, \colon] 
    \end{bmatrix}
    \in \Cbb^{\Nf \times \Ns r}.
  \end{equation}
  Now,
  $\rank{\tilde{\Phi}} \leq \sum_{i=1}^{\Nk} \rank{\tilde{\Phi}^{J_i}}
  \leq \sum_{i=1}^{\Nk} \rank{\B[J_i,\colon]}$, where the final inequality follows from
  \cref{thm:kr_rank}
  and $\rank{(\ones{\Nf}^{\mathsf{T}} \otimes \mathbf{H}[J_i, \colon])} = 1$.  Setting this upper bound to $\Ns r$
  gives the statement.
\end{proof}
%%%%%%%%%%%%%%%%%%%%%%%%%%%%%%%%%%%%%%%%%%%%%%%%%%%%%%%%%%%%%%%%%%%%%%%%%%%%%%% 

\begin{proof}[Proof of \cref{thm:n_species_suff}]
  We omit the superscript $\mathbf{q}$.  It suffices to prove the case where $\tilde{\Phi}$ is square, $\Nk =
  r$ and $\Nf = \Ns$. Then $\rank{\tilde{\Phi}} \in \Cbb^{\Ns r \times \Ns r} = \Ns r$ if and only
  if
  \begin{align}
 \theta(\mathbf{H}) &\triangleq \det{\tilde{\Phi}} 
                      = \det{\mathbf{[\bar{H}} \odot \Bbar]}
       \neq 0.
  \end{align} 
  Now, $\theta(\mathbf{H})$ is a multivariate polynomial in the entries of $\mathbf{H}$ whose
  coefficents depend only on the entries of $\Bbar$. Thus $\theta(\mathbf{H})$ is
  either identically zero or its zero set is an affine algebraic set and thus a nowhere dense set of
  measure zero. It suffices to show $\theta(\mathbf{H}) \neq 0$ for a single choice of
  $\mathbf{H}$ (see, \eg, \cite{Harikumar1998-Fir-Perfec, Harikumar1999-Perfec-Blind,
    Jiang2001-Almos-Sure} and references therein).

We can permute the rows of $\tilde{\Phi}$ such that the first $\Nk$ rows are indexed by $J_1$, the
next $\Nk$ rows by $J_2$, and so on.  In particular, there is a permutation matrix $\Pi \in \Cbb^{\Nk \Ns
  \times \Nk \Ns}$ such that (\cf \cref{eq:phi_tilde_block})
\begin{align}
  \Pi \tilde{\Phi} = 
  \begin{bmatrix}
    \mathbf{D}_1[J_1, J_1] \B_1[J_1, \colon] &  \hdots & \mathbf{D}_{\Nf}[J_1, J_1] \B_1[J_1, \colon] \\
    \vdots                     &  & \vdots \\
    \mathbf{D}_1[J_1, J_1] \B_{\Nf}[J_1, \colon] &  \hdots & \mathbf{D}_{\Nf}[J_1, J_1] \B_{\Nf}[J_1, \colon] \\
    \mathbf{D}_1[J_2, J_2] \B_1[J_2, \colon] &  \hdots & \mathbf{D}_{\Nf}[J_2, J_2] \B_1[J_2, \colon] \\
    \vdots                     &  \ddots & \vdots \\
    \mathbf{D}_1[J_{\Nf}, J_{\Nf}] \B_{\Nf}[J_{\Nf}, \colon] &  \hdots & \mathbf{D}_{\Nf}[J_{\Nf}, J_{\Nf}] \B_{\Nf}[J_{\Nf}, \colon] 
  \end{bmatrix}
                                                                    = 
  \begin{bmatrix}
\mathbf{\check{D}}_{1}^{J_1} \mathbf{C}_1 &  \hdots & \mathbf{\check{D}}_{\Nf}^{J_1} \mathbf{C}_1 \\
\mathbf{\check{D}}_{1}^{J_2} \mathbf{C}_2 &  \hdots & \mathbf{\check{D}}_{\Nf}^{J_2} \mathbf{C}_2 \\
    \vdots                     &  \ddots & \vdots \\
\mathbf{\check{D}}_{1}^{J_{\Nf}} \mathbf{C}_{\Nf} &  \hdots & \mathbf{\check{D}}_{\Nf}^{J_{\Nf}} \mathbf{C}_{\Nf}
  \end{bmatrix},
\end{align}
where, in an abuse of notation, we write $\mathbf{\check{D}}_i^{J} = (\ones{\Nf}^{\mathsf{T}} \otimes
\mathbf{D}_i[J, J])$.

Next, we specify the choice of $\mathbf{H}$.
By assumption, $r = \Nk = m \Nf$ for some integer $m$. For each $i \in [\Ns]$, we set $\mathbf{h}_{i}[J_i] =
\mathbf{1}_m$ and set remaining coordinates are set to zero. With this construction,
$\mathbf{\check{D}}_{i}^{J_j} = \mathbf{I}_{m}$ if $i = j$; otherwise, $\mathbf{\check{D}}_{i}^{J_j} =
\mathbf{0}_{m}$.  Now
  \begin{align}
    \Pi \tilde{\Phi} = 
                        \begin{bmatrix}
                          \mathbf{C}_1 & \mathbf{0}_m & \hdots & \mathbf{0}_m \\
                          \mathbf{0}_m & \mathbf{C}_2& \hdots & \mathbf{0}_m \\
                          \vdots & \vdots  & \ddots & \vdots \\
                          \mathbf{0}_m & \mathbf{0}_m & \hdots & \mathbf{C}_{\Nf}
                        \end{bmatrix}, 
  \end{align}
  that is,  $\tilde{\Phi}$ is similar to a block diagonal matrix.  As each block along the diagonal is
  full rank by assumption, $\tilde{\Phi}$ is full rank.
\end{proof}
%%%%%%%%%%%%%%%%%%%%%%%%%%%%%%%%%%%%%%%%%%%%%%%%%%%%%%%%%%%%%%%%%%%%%%%%%%%%%%% 

\bibliographystyle{myIEEEtran}
\bibliography{IEEEabrv,/home/pfister/research/bib/jabref}

% Generated by IEEEtran.bst, version: 1.14 (2015/08/26)
\begin{thebibliography}{10}
\providecommand{\url}[1]{#1}
\csname url@samestyle\endcsname
\providecommand{\newblock}{\relax}
\providecommand{\bibinfo}[2]{#2}
\providecommand{\BIBentrySTDinterwordspacing}{\spaceskip=0pt\relax}
\providecommand{\BIBentryALTinterwordstretchfactor}{4}
\providecommand{\BIBentryALTinterwordspacing}{\spaceskip=\fontdimen2\font plus
\BIBentryALTinterwordstretchfactor\fontdimen3\font minus
  \fontdimen4\font\relax}
\providecommand{\BIBforeignlanguage}[2]{{%
\expandafter\ifx\csname l@#1\endcsname\relax
\typeout{** WARNING: IEEEtran.bst: No hyphenation pattern has been}%
\typeout{** loaded for the language `#1'. Using the pattern for}%
\typeout{** the default language instead.}%
\else
\language=\csname l@#1\endcsname
\fi
#2}}
\providecommand{\BIBdecl}{\relax}
\BIBdecl

\bibitem{Bhargava2012}
R.~Bhargava, ``Infrared spectroscopic imaging: The next generation,''
  \emph{Appl. Spectrosc.}, vol.~66, pp. 1091--1120, Oct. 2012.

\bibitem{Davis2010a}
B.~J. Davis, P.~S. Carney, and R.~Bhargava, ``Theory of midinfrared absorption
  microspectroscopy: {I}. {H}omogeneous samples,'' \emph{Anal. Chem.}, vol.~82,
  pp. 3474--3486, May 2010.

\bibitem{Davis2010b}
------, ``Theory of midinfrared absorption microspectroscopy: {II}.
  {H}eterogeneous samples,'' \emph{Anal. Chem.}, vol.~82, pp. 3487--3499, May
  2010.

\bibitem{Reddy2013-High-Defin}
R.~K. Reddy, M.~J. Walsh, M.~V. Schulmerich, P.~S. Carney, and R.~Bhargava,
  ``High-definition infrared spectroscopic imaging,'' \emph{Appl. Spectrosc.},
  vol.~67, pp. 93--105, Jan 2013.

\bibitem{Tiwari2016}
S.~Tiwari, J.~Raman, V.~Reddy, A.~Ghetler, R.~P. Tella, Y.~Han, C.~R. Moon,
  C.~D. Hoke, and R.~Bhargava, ``Towards translation of discrete frequency
  infrared spectroscopic imaging for digital histopathology of clinical biopsy
  samples,'' \emph{Anal. Chem.}, vol.~88, pp. 10\,183--10\,190, 2016.

\bibitem{Tiwari2017}
S.~Tiwari, ``Development of stainless cardiac histology of clinical biopsy
  samples with infrared spectroscopy,'' Master's thesis, Univeristy of Illinois
  at Urbana Champaign, 2017.

\bibitem{Mittal2018-Simul-Cancer}
S.~Mittal, K.~Yeh, L.~S. Leslie, S.~Kenkel, A.~Kajdacsy-Balla, and R.~Bhargava,
  ``Simultaneous cancer and tumor microenvironment subtyping using confocal
  infrared microscopy for all-digital molecular histopathology,''
  \emph{Proceedings of the National Academy of Sciences}, vol. 115, pp.
  E5651--E5660, 2018.

\bibitem{Fercher2003}
A.~F. Fercher, W.~Drexler, C.~K. Hitzenberger, and T.~Lasser, ``Optical
  coherence tomography-principles and applications,'' \emph{Reports on Progress
  in Physics}, vol.~66, pp. 239--303, 2003.

\bibitem{Davis2008}
B.~J. Davis, D.~L. Marks, T.~S. Ralston, P.~S. Carney, and S.~A. Boppart,
  ``Interferometric synthetic aperture microscopy: Computed imaging for scanned
  coherent microscopy,'' \emph{Sensors}, vol.~8, pp. 3903--3931, Jun. 2008.

\bibitem{Ralston2007-Inter-Synth}
T.~S. Ralston, D.~L. Marks, P.~S. Carney, and S.~A. Boppart, ``Interferometric
  synthetic aperture microscopy,'' \emph{Nature Physics}, vol.~3, pp. 129--134,
  2007.

\bibitem{Davis2007}
B.~J. Davis, S.~C. Schlachter, D.~L. Marks, T.~S. Ralston, S.~A. Boppart, and
  P.~S. Carney, ``Nonparaxial vector-field modeling of optical coherence
  tomography and interferometric synthetic aperture microscopy,'' \emph{JOSA
  A}, vol.~24, pp. 2527--2542, 2007.

\bibitem{Oldenburg2007}
A.~L. Oldenburg, C.~Xu, and S.~A. Boppart, ``Spectroscopic optical coherence
  tomography and microscopy,'' \emph{{IEEE} J. Sel. Topics Quantum Electron.},
  vol.~13, pp. 1629--1640, 2007.

\bibitem{Morgner2000}
U.~Morgner, W.~Drexler, F.~X. K{\"a}rtner, X.~D. Li, C.~Pitris, E.~P. Ippen,
  and J.~G. Fujimoto, ``Spectroscopic optical coherence tomography,''
  \emph{Opt. Lett.}, vol.~25, pp. 111--114, 2000.

\bibitem{Bosschaart2013}
N.~Bosschaart, T.~G. van Leeuwen, M.~C.~G. Aalders, and D.~J. Faber,
  ``Quantitative comparison of analysis methods for spectroscopic optical
  coherence tomography,'' \emph{Biomedical Optics Express}, vol.~4, pp.
  2570--2584, 2013.

\bibitem{Deutsch2015}
B.~Deutsch, R.~Reddy, D.~Mayerich, R.~Bhargava, and P.~S. Carney,
  ``Compositional prior information in computational infrared spectroscopic
  imaging,'' \emph{J. Opt. Soc. Am. A}, vol.~32, pp. 1126--1131, Jun. 2015.

\bibitem{Ralston2006}
T.~S. Ralston, D.~L. Marks, P.~S. Carney, and S.~A. Boppart, ``Inverse
  scattering for optical coherence tomography,'' \emph{JOSA A}, vol.~23, pp.
  1027--1037, 2006.

\bibitem{Sentenac2018}
A.~Sentenac and J.~Mertz, ``Unified description of three-dimensional optical
  diffraction microscopy: From transmission microscopy to optical coherence
  tomography: Tutorial,'' \emph{Journal of the Optical Society of America A},
  vol.~35, pp. 748--754, 2018.

\bibitem{Wu1987-Diffr-Tomog}
R.~Wu and M.~N. Toks{\"o}z, ``Diffraction tomography and multisource holography
  applied to seismic imaging,'' \emph{GEOPHYSICS}, vol.~52, pp. 11--25, 1987.

\bibitem{Kak2001}
A.~C. Kak and M.~Slaney, \emph{Principles of Computerized Tomographic
  Imaging}.\hskip 1em plus 0.5em minus 0.4em\relax Society of Industrial and
  Applied Mathematics, 2001.

\bibitem{Liang2007}
Z.~{P}ei Liang, ``Spatiotemporal imaging with partially separable functions,''
  in \emph{2007 IEEE International Symposium on Biomedical Imaging: From Nano
  to Macro}, - 2007, pp. 988--991.

\bibitem{Nguyen2010}
H.~Nguyen, J.~Haldar, M.~Do, and Z.-P. Liang, ``Denoising of {MR} spectroscopic
  imaging data with spatial-spectral regularization,'' in \emph{2010 IEEE
  International Symposium on Biomedical Imaging: From Nano to Macro}, 2010, pp.
  720--723.

\bibitem{Nguyen2011}
H.~Nguyen, X.~Peng, M.~Do, and Z.-P. Liang, ``Spatiotemporal denoising of {MR}
  spectroscopic imaging data by low-rank approximations,'' in \emph{IEEE
  International Symposium on Biomedical Imaging: From Nano to Macro}, 2011, pp.
  857--860.

\bibitem{Nguyen2013}
------, ``Denoising {MR} spectroscopic imaging data with low-rank
  approximations,'' \emph{{IEEE} Trans. Biomed. Eng.}, vol.~60, pp. 78--89,
  2013.

\bibitem{Alvarez1976-Energ-Selec}
R.~E. Alvarez and A.~Macovski, ``Energy-selective reconstructions in {X}-ray
  computerised tomography,'' \emph{Physics in Medicine and Biology}, vol.~21,
  pp. 733--744, 1976.

\bibitem{Cammin2012-Spect}
J.~Cammin, J.~S. Iwanczyk, and K.~Taguchi, ``Spectral/photon-counting computed
  tomography,'' in \emph{Emerging Imaging Technologies in Medicine}.\hskip 1em
  plus 0.5em minus 0.4em\relax CRC Press, 2012, pp. 40--57.

\bibitem{Kress2014-Linear-Integ-Equat}
R.~Kress, \emph{Linear Integral Equations}, ser. Applied Mathematical
  Sciences.\hskip 1em plus 0.5em minus 0.4em\relax Springer New York, 2014.

\bibitem{Barrett2004-Foundations}
H.~Barrett, \emph{Foundations of Image Science}.\hskip 1em plus 0.5em minus
  0.4em\relax Hoboken, NJ: Wiley-Interscience, 2004.

\bibitem{Landau1962-Prolat-Spher}
H.~J. Landau and H.~O. Pollak, ``Prolate spheroidal wave functions, {F}ourier
  analysis and uncertainty-{III}: {T}he dimension of the space of essentially
  time- and band-limited signals,'' \emph{Bell System Technical Journal},
  vol.~41, pp. 1295--1336, 1962.

\bibitem{Slepian1961-Prolat-Spher}
D.~Slepian and H.~O. Pollak, ``Prolate spheroidal wave functions, {F}ourier
  analysis and uncertainty - {I},'' \emph{Bell System Technical Journal},
  vol.~40, pp. 43--63, 1961.

\bibitem{Landau1961-Prolat-Spher}
H.~J. Landau and H.~O. Pollak, ``Prolate spheroidal wave functions, {F}ourier
  analysis and uncertainty - {II},'' \emph{Bell System Technical Journal},
  vol.~40, pp. 65--84, 1961.

\bibitem{Slepian1978-Prolat-Spher}
D.~Slepian, ``Prolate spheroidal wave functions, {F}ourier analysis, and
  uncertainty-{V}: {T}he discrete case,'' \emph{Bell System Technical Journal},
  vol.~57, pp. 1371--1430, 1978.

\bibitem{Slepian1983}
------, ``Some comments on {F}ourier analysis, uncertainty and modeling,''
  \emph{SIAM Review}, vol.~25, pp. 379--393, 1983.

\bibitem{Jain1981-Extrap-Algor}
A.~Jain and S.~Ranganath, ``Extrapolation algorithms for discrete signals with
  application in spectral estimation,'' \emph{IEEE Transactions on Acoustics,
  Speech, and Signal Processing}, vol.~29, pp. 830--845, 1981.

\bibitem{Gruenbaum1981-Eigen-Toepl-Matrix}
F.~A. Gr{\"u}nbaum, ``Eigenvectors of a {T}oeplitz matrix: Discrete version of
  the prolate spheroidal wave functions,'' \emph{SIAM Journal on Algebraic
  Discrete Methods}, vol.~2, pp. 136--141, 1981.

\bibitem{Zhu2018-Eigen-Distr}
Z.~Zhu, S.~Karnik, M.~A. Davenport, J.~Romberg, and M.~B. Wakin, ``The
  eigenvalue distribution of discrete periodic time-frequency limiting
  operators,'' \emph{IEEE Signal Processing Letters}, vol.~25, pp. 95--99,
  2018.

\bibitem{Xu2014}
Y.~Xu, X.~K.~B. Chng, S.~G. Adie, S.~A. Boppart, and P.~S. Carney, ``Multifocal
  interferometric synthetic aperture microscopy,'' \emph{Opt. Express},
  vol.~22, pp. 16\,606--16\,618, 2014.

\bibitem{Wolf1985-Analy-Angul}
E.~Wolf and M.~Nieto-Vesperinas, ``Analyticity of the angular spectrum
  amplitude of scattered fields and some of its consequences,'' \emph{Journal
  of the Optical Society of America A}, vol.~2, pp. 886--890, 1985.

\bibitem{Devaney1978-Nonun-Inver}
A.~J. Devaney, ``Nonuniqueness in the inverse scattering problem,''
  \emph{Journal of Mathematical Physics}, vol.~19, pp. 1526--1531, 1978.

\bibitem{Sidiropoulos2000-Uniquen-Multil}
N.~D. Sidiropoulos and R.~Bro, ``On the uniqueness of multilinear decomposition
  of n-way arrays,'' \emph{Journal of Chemometrics}, vol.~14, pp. 229--239,
  2000.

\bibitem{Khanna2019-Correc-To}
S.~Khanna and C.~R. Murthy, ``Corrections to `{O}n the restricted isometry of
  the columnwise {K}hatri-{R}ao product','' \emph{IEEE Transactions on Signal
  Processing}, vol.~67, pp. 2387--2388, 2019.

\bibitem{Khanna2018-Restr-Isomet}
------, ``On the restricted isometry of the columnwise {K}hatri-{R}ao
  product,'' \emph{IEEE Transactions on Signal Processing}, vol.~66, pp.
  1170--1183, 2018.

\bibitem{Fengler2019-Restr-Isomet}
A.~Fengler and P.~Jung, ``On the restricted isometry property of centered self
  {K}hatri-{R}ao products,'' \emph{CoRR}, 2019,
  \href{http://arxiv.org/abs/1905.09245}{{\ttfamily arXiv:1905.09245 [cs.IT]}}.

\bibitem{Allman2009-Ident-Param}
E.~S. Allman, C.~Matias, and J.~A. Rhodes, ``Identifiability of parameters in
  latent structure models with many observed variables,'' \emph{The Annals of
  Statistics}, vol.~37, pp. 3099--3132, 2009.

\bibitem{Bhaskara2014-Smoot}
A.~Bhaskara, M.~Charikar, A.~Moitra, and A.~Vijayaraghavan, ``Smoothed analysis
  of tensor decompositions,'' in \emph{Proceedings of the 46th Annual ACM
  Symposium on Theory of Computing - STOC '14}, - 2014, pp. 594--603.

\bibitem{Gamal2015-When}
A.~E. Gamal, N.~Naderializadeh, and A.~S. Avestimehr, ``When does an ensemble
  of matrices with randomly scaled rows lose rank?'' in \emph{2015 IEEE
  International Symposium on Information Theory (ISIT)}, 6 2015, pp.
  1502--1506.

\bibitem{Eldar2009}
Y.~C. Eldar and H.~Bolcskei, ``Block-sparsity: Coherence and efficient
  recovery,'' in \emph{2009 IEEE International Conference on Acoustics, Speech
  and Signal Processing}, Apr. 2009, pp. 2885--2888.

\bibitem{Eldar2010}
Y.~Eldar, P.~Kuppinger, and H.~Bolcskei, ``Block-sparse signals: Uncertainty
  relations and efficient recovery,'' \emph{{IEEE} Trans. Signal Process.},
  vol.~58, pp. 3042--3054, 2010.

\bibitem{Eldar2009c}
Y.~Eldar and M.~Mishali, ``Robust recovery of signals from a structured union
  of subspaces,'' \emph{{IEEE} Trans. Inf. Theory}, vol.~55, pp. 5302--5316,
  2009.

\bibitem{Elhamifar2012}
E.~Elhamifar and R.~Vidal, ``Block-sparse recovery via convex optimization,''
  \emph{{IEEE} Trans. Signal Process.}, vol.~60, pp. 4094--4107, 2012.

\bibitem{Ralston2006a}
T.~S. Ralston, D.~L. Marks, S.~A. Boppart, and P.~S. Carney, ``Inverse
  scattering for high-resolution interferometric microscopy,'' \emph{Opt.
  Lett.}, vol.~31, pp. 3585--3587, 2006.

\bibitem{Erdogan1999}
H.~Erdogan and J.~Fessler, ``Monotonic algorithms for transmission
  tomography,'' \emph{{IEEE} Trans. Med. Imag.}, vol.~18, pp. 801--14, Sep.
  1999.

\bibitem{Beck2009}
A.~Beck and M.~Teboulle, ``A fast iterative shrinkage-thresholding algorithm
  for linear inverse problems,'' \emph{SIAM Journal on Imaging Sciences},
  vol.~2, pp. 183--202, Jan. 2009.

\bibitem{Eckstein1992}
J.~Eckstein and D.~Bertsekas, ``\BIBforeignlanguage{English}{On the
  {D}ouglas-{R}achford splitting method and the proximal point algorithm for
  maximal monotone operators},''
  \emph{\BIBforeignlanguage{English}{Mathematical Programming}}, vol.~55, pp.
  293--318, 1992.

\bibitem{Boyd2010}
S.~Boyd, N.~Parikh, E.~Chu, B.~Peleato, and J.~Eckstein, ``Distributed
  optimization and statistical learning via the alternating direction method of
  multipliers,'' \emph{Foundations and Trends in Machine Learning}, vol.~3, pp.
  1--122, 2010.

\bibitem{Afonso2011}
M.~V. Afonso, J.~M. Bioucas-Dias, and M.~A.~T. Figueiredo, ``An augmented
  {L}agrangian approach to the constrained optimization formulation of imaging
  inverse problems,'' \emph{{IEEE} Trans. Image Process.}, vol.~20, pp.
  681--695, 2011.

\bibitem{Goodman2015-Statis-Optic}
J.~W. Goodman, \emph{Statistical Optics (Wiley Series in Pure and Applied
  Optics)}.\hskip 1em plus 0.5em minus 0.4em\relax Wiley, 2015.

\bibitem{scikitscuda}
L.~E. Givon, T.~Unterthiner, N.~B. Erichson, D.~W. Chiang, E.~Larson,
  L.~Pfister, S.~Dieleman, G.~R. Lee, S.~van~der Walt, B.~Menn, T.~M. Moldovan,
  F.~Bastien, X.~Shi, J.~Schl\"{u}ter, B.~Thomas, C.~Capdevila, A.~Rubinsteyn,
  M.~M. Forbes, J.~Frelinger, T.~Klein, B.~Merry, N.~Merill, L.~Pastewka, L.~Y.
  Liu, S.~Clarkson, M.~Rader, S.~Taylor, A.~Bergeron, N.~H. Ukani, F.~Wang,
  W.-K. Lee, and Y.~Zhou, ``scikit-cuda 0.5.3: a {Python} interface to
  {GPU}-powered libraries,'' May 2019.

\bibitem{Kloeckner2012-Pycud-Pyopen}
A.~Kl{\"o}ckner, N.~Pinto, Y.~Lee, B.~Catanzaro, P.~Ivanov, and A.~Fasih,
  ``Py{CUDA} and {P}y{OpenCL}: {A} scripting-based approach to {GPU} run-time
  code generation,'' \emph{Parallel Computing}, vol.~38, pp. 157--174, 2012.

\bibitem{Devaney2012}
B.~Anthony J.~Devaney, Northeastern~University, \emph{Mathematical Foundations
  of Imaging, Tomography and Wavefield Inversion}.\hskip 1em plus 0.5em minus
  0.4em\relax Cambridge University Press, 2012.

\bibitem{Coifman1993-Fast-Multip}
R.~Coifman, V.~Rokhlin, and S.~Wandzura, ``The fast multipole method for the
  wave equation: {A} pedestrian prescription,'' \emph{IEEE Antennas and
  Propagation Magazine}, vol.~35, pp. 7--12, 1993.

\bibitem{Meng2018-Wideb-Fast}
L.~L. Meng, M.~Hidayetoglu, T.~Xia, W.~E.~I. Sha, L.~J. Jiang, and W.~C. Chew,
  ``A wideband 2-d fast multipole algorithm with a novel diagonalization
  form,'' \emph{IEEE Transactions on Antennas and Propagation}, vol.~66, pp.
  7477--7482, 2018.

\bibitem{Hidayetoglu2018-Fast-Massiv}
M.~Hidayetoglu, C.~Pearson, I.~E. Hajj, L.~Gurel, W.~C. Chew, and W.-M. Hwu,
  ``A fast and massively-parallel inverse solver for multiple-scattering
  tomographic image reconstruction,'' in \emph{2018 IEEE International Parallel
  and Distributed Processing Symposium (IPDPS)}, 5 2018, pp. 64--74.

\bibitem{Fowles1989-Introd}
G.~Fowles, \emph{Introduction to Modern Optics}.\hskip 1em plus 0.5em minus
  0.4em\relax New York: Dover Publications, 1989.

\bibitem{Chen1998}
B.~Chen and J.~J. Stamnes, ``Validity of diffraction tomography based on the
  first {B}orn and the first {R}ytov approximations,'' \emph{Appl. Opt.},
  vol.~37, pp. 2996--3006, 1998.

\bibitem{Slaney1984}
M.~Slaney, A.~Kak, and L.~Larsen, ``Limitations of imaging with first-order
  diffraction tomography,'' \emph{{IEEE} Trans. Microw. Theory Techn.},
  vol.~32, pp. 860--874, 1984.

\bibitem{Anastasio2006}
M.~A. Anastasio, D.~Shi, and G.~Gbur, ``Multispectral intensity diffraction
  tomography reconstruction theory: Quasi-nondispersive objects,''
  \emph{Journal of the Optical Society of America A}, vol.~23, pp. 1359--1368,
  2006.

\bibitem{Anastasio2009}
M.~A. Anastasio, Q.~Xu, and D.~Shi, ``Multispectral intensity diffraction
  tomography: Single material objects with variable densities,'' \emph{Journal
  of the Optical Society of America A}, vol.~26, pp. 403--412, 2009.

\bibitem{Horstmeyer2016-Diffr-Tomog}
R.~Horstmeyer, J.~Chung, X.~Ou, G.~Zheng, and C.~Yang, ``Diffraction tomography
  with {F}ourier ptychography,'' \emph{Optica}, vol.~3, pp. 827--835, 2016.

\bibitem{Pham2018-Versat-Recon}
T.-A. Pham, E.~Soubies, A.~Goy, J.~Lim, F.~Soulez, D.~Psaltis, and M.~Unser,
  ``Versatile reconstruction framework for diffraction tomography with
  intensity measurements and multiple scattering,'' \emph{Optics Express},
  vol.~26, pp. 2749--2763, 2018.

\bibitem{Hidayetoglu2018-Super-Full}
M.~Hidayetoglu, W.-M. Hwu, and W.~C. Chew, ``Supercomputing for full-wave
  tomographic image reconstruction in near-real time,'' in \emph{2018 IEEE
  International Symposium on Antennas and Propagation \& USNC/URSI National
  Radio Science Meeting}, 7 2018, pp. 1841--1842.

\bibitem{Liu2017a}
H.-Y. Liu, D.~Liu, H.~Mansour, P.~T. Boufounos, L.~Waller, and U.~S. Kamilov,
  ``{SEAGLE}: Sparsity-driven image reconstruction under multiple scattering,''
  \emph{IEEE Transactions on Computational Imaging}, vol.~4, pp. 73--86, 2018.

\bibitem{Sun2018-Effic-Accur}
Y.~Sun, Z.~Xia, and U.~S. Kamilov, ``Efficient and accurate inversion of
  multiple scattering with deep learning,'' \emph{Optics Express}, vol.~26, pp.
  14\,678--14\,688, 2018.

\bibitem{Soubies2017}
E.~Soubies, T.-A. Pham, and M.~Unser, ``Efficient inversion of
  multiple-scattering model for optical diffraction tomography,'' \emph{Optics
  Express}, vol.~25, pp. 21\,786--21\,800, 2017.

\bibitem{Golub2012}
G.~H. Golub and C.~F.~V. Loan, \emph{Matrix Computations}.\hskip 1em plus 0.5em
  minus 0.4em\relax Johns Hopkins University Press, 2012.

\bibitem{Harikumar1998-Fir-Perfec}
G.~Harikumar and Y.~Bresler, ``{FIR} perfect signal reconstruction from
  multiple convolutions: Minimum deconvolver orders,'' \emph{IEEE Transactions
  on Signal Processing}, vol.~46, pp. 215--218, 1998.

\bibitem{Harikumar1999-Perfec-Blind}
------, ``Perfect blind restoration of images blurred by multiple filters:
  Theory and efficient algorithms,'' \emph{IEEE Transactions on Image
  Processing}, vol.~8, pp. 202--219, 1999.

\bibitem{Jiang2001-Almos-Sure}
T.~Jiang, N.~Sidiropoulos, and J.~ten Berge, ``Almost-sure identifiability of
  multidimensional harmonic retrieval,'' \emph{IEEE Transactions on Signal
  Processing}, vol.~49, pp. 1849--1859, 2001.

\end{thebibliography}

\end{document}